\providecommand{\keywords}[1]
{
	\small	
	% \textbf{\textit{Keywords---}} #1
	\quad \textbf{Keywords:} #1
}
\newtheorem{theorem}{Theorem}
\newtheorem{corollary}{Corollary}
\newtheorem{lemma}{Lemma}
\newtheorem{assumption}{Assumption}
\newtheorem{definition}{Definition}
\newtheorem{example}{Example}
\newtheorem{proposition}{Proposition}
\newcommand\blfootnote[1]{%
	\begingroup
	\renewcommand\thefootnote{}\footnote{#1}%
	\addtocounter{footnote}{0}%
	\endgroup
}
\title{Improve Sensitivity Analysis Synthesizing Randomized Clinical Trials With Limited Overlap
\blfootnote{Kuan Jiang and Wenjie Hu have equal contributions to this work. 
}
}
\author[a]{Kuan Jiang}
\author[b]{Wenjie Hu}
\author[c]{Shu Yang}
\author[d]{Xinxing Lai} 
\author[e]{Xiaohua Zhou \thanks{Corresponding author}}
\affil[a]{Department of Biostatistics, School of Public Health, Peking University }
\affil[b]{Department of Probability and Statistics, Peking University}
\affil[c]{Department of Statistics, North Carolina State University}
\affil[d]{Institute for Brain Disorders, Beijing University of Chinese Medicine}
\affil[e]{Department of Biostatistics and Beijing International Center for Mathematical Research, Peking University}
\begin{document}

	\date{}
	\maketitle

\begin{abstract}
Randomized clinical trials are the gold standard when estimating the average treatment effect. However, they are usually not a random sample from the real-world population because of the inclusion/exclusion rules.    
% face with the threaten from biasd sample selection and not fully overlapped with the real-world population problem.
Meanwhile, observational studies typically consist of representative samples from the real-world population. 
However, due to unmeasured confounding, sensitivity analysis is often used to estimate bounds for the average treatment effect without relying on stringent assumptions  of other existing methods. This article introduces a synthesis estimator that improves sensitivity analysis in observational studies by incorporating randomized clinical trial data, even when overlap in covariate distribution is limited due to inclusion/exclusion criteria. We show that the proposed estimator will give a tighter bound when a “separability” condition holds for the sensitivity parameter. Theoretical proofs and simulations show that this method provides a tighter bound than the sensitivity analysis using only observational study. We apply this method to combine an observational study on drug effectiveness with a partially overlapping RCT dataset, yielding improved average treatment effect bounds.

\end{abstract}

\keywords{Data fusion, Generalizaibility, Positivity violation, Transportability}

\section{Introduction}
To estimate the average treatment effect (ATE), the randomized clinical trial (RCT)  offers unbiased estimation due to perfect randomization and is considered as the gold standard. However, the RCT population may differ from the real-world population due to the covariates distribution shift from the sample selection procedure, statistical methods are necessary to adjust the bias. Existing RCT generalization methods 
% adjust covariates distribution disparities between RCT and OS data to make the ATE more representative of the real-world population, which 
can be categorized as follows: outcome model-based method \citep{dahabreh2019generalizing}, inversed probability of sampling weighting method \citep{cole2010generalizing,stuart2011use,tipton2013improving,o2014generalizing,buchanan2018generalizing}, augmented inversed probability of sampling weighting method \citep{zhang2016new,dahabreh2019generalizing}, calibration weighting method and augmented calibration weighting method \citep{lee2023improving}.  
These methods assume a positivity assumption, which requires that all real-world population patients have a positive probability of RCT selection.  
However, because of inclusion and exclusion criteria, the positivity assumption may not be satisfied for the RCT-generalization methods, so the adjusted ATE represents only the  real-world trial-eligible population \citep{zivich2022positivity}. Recently, \citet{zivich2022positivity,zivich2023synthesis} attempted to overcome violations of this assumption by combining a statistical generalization model and a parametric model to simulate the ATE. However, their model relies on parameters chosen based on domain knowledge, which may not always be feasible in practice. 

Observational studies (OS) on real-world cohorts can be used to estimate the ATE for the real-world population when the RCT is not available.
However, consistent ATE estimation through OSs relies on the no unmeasured confounders assumption, which can be violated due to insufficient controlled confounders \citep{Imbens_Rubin_2015}. While methods like utilizing instrumental variables \citep{imbens1994identification,angrist1996identification,newey2003instrumental} and double negative control variables \citep{miao2018confounding,tchetgen2020introduction,miao2023identifying} aim to address this issue, they require additional assumptions that are difficult to verify in practice. Therefore, sensitivity analysis is widely used to provide bounds on the ATE without relying on complex assumptions. The key idea for conducting sensitivity analysis on ATE bounds is adjusting the inconsistent estimator with non-identified sensitivity parameters to obtain a consistent estimator for ATE. Researchers set a feasible range for the sensitivity parameter based on their experience and then infer the ATE bounds accordingly. So far, plenty of methods have been proposed by researchers such as
\citet{rosenbaum1983assessing,robins1999association,frank2000impact,imbens2003sensitivity,brumback2004sensitivity,rosenbaum2005sensitivity,frank2008does,hosman2010sensitivity,imai2010identification,vanderweele2011bias,blackwell2014selection,dorie2016flexible,franks2019flexible,oster2019unobservable,zhao2019sensitivity,cinelli2020making,chernozhukov2022long,lu2023flexible,zhou2023sensitivity,dorn2023sharp}. Specifically, \citet{cinelli2020making,chernozhukov2022long} propose a methodology based on omitted variable bias to evaluate the robustness of the estimation when there exists potential unmeasured confounders. The method can be conveniently applied to practical analysis because benchmark variables can be used to bound the unmeasured confounding strength.

In this article, we give a new sensitivity analysis framework for estimating the ATE in the  real-world population by combing OS dataset and RCT dataset that have limited covariates distribution overlap due to inclusion/exclusion criteria. Under this framework, we propose a synthesis estimator that combines RCT-generalization methods and classical sensitivity analysis methods. 
This synthetic estimator  addresses the positivity assumption violations in existing RCT generalization methods, and gives a valid bound for the ATE in the super-population. 
Besides, we show that the synthesis estimator can give tighter bounds for the ATE comparing to the classical sensitivity analysis methods using only OS data when a “separability” condition holds for the sensitivity parameter.   Compared to \cite{zivich2022positivity,zivich2023synthesis}, we do not require a parametric model in our synthesis estimator, which makes our approach more robust against possible model misspecifications.
%The approach involves the following steps: 1) Data Segmentation: Separate the OS data into two subsets: one with covariates that fully overlap with the RCT and one that does not due to RCT inclusion/exclusion criteria. 2) Generalization: Use the RCT-overlapping OS dataset to generalize RCT results through statistical methods. 3) Sensitivity analysis: Estimate ATE bounds for the non-overlapping OS data using sensitivity analysis methods. 4) Synthesis: Combine the two estimates using the total expectation formula to obtain sensitivity analysis bounds for ATE in the OS data. 
We also conduct simulations and real data analysis to demonstrate the good performance of the synthesis estimator compared to the sensitivity analysis method using only OS data.   
% Applying this method to our practical analysis example reveals no significant difference between traditional Chinese medicine and Losartan in lowering blood pressure. 
All proofs are relegated to the supplementary materials.

\section{Notations and Assumptions}\label{sec:2}

To estimate the ATE in a super-population, denoted as $\psi$, we consider covariates $X \in \mathbb{R}^{q}$, a binary treatment assignment $A = a, a \in \mathcal{A}, \mathcal{A} = \{0,1\}$, and the outcome $Y$. Using the potential outcome framework by \citet{Imbens_Rubin_2015}, each unit has potential outcomes \(Y(1)\) and \(Y(0)\) corresponding to treatments \(A = 1\) and \(A = 0\), respectively. We assume that $Y = AY(1) + (1 - A)Y(0)$. Our target parameter is represented as \(\psi = E\{Y(1) - Y(0)\}\).

The observational study cohort is a random sample from the real-world population without selection bias, while the RCT sample depends on inclusion/exclusion criteria and complicated sampling procedures, leading to an selection bias. We denote the observational and RCT datasets as \(D_O\) and \(D_R\), respectively. Assume the covariates \(X\) can be split into two parts: \(X = (W, V)\), where \(W \in \mathbb{R}^s\) are covariates independent of inclusion/exclusion restrictions, and \(V \in \mathbb{R}^t\) are covariates restricted by the known inclusion/exclusion criteria (\(s + t = q\)). A binary variable \(V^*\) indicates whether a unit violates the RCT's inclusion/exclusion criteria. Units with \(V_i^* = 0\) can participate in the RCT, while those with \(V_i^* = 1\) are excluded. For example, if patients over 60 years old are excluded, then \(V^*_i = I(\text{age}_i > 60)\). Let \(S_i = 1\) denote RCT participation. The RCT dataset can be represented as \(D_R = \{A_i, W_i, V_i, Y_i, V^*_i = 0, S_i = 1\}, i \in \{1, ..., n\}\). The observational dataset \(D_O\) can be split into two subsets: \(\{A_i, X_i, Y_i, V^*_i = 0, S_i = 0\}, i \in \{n+1, ..., n+N_0\}\) and \(\{A_i, X_i, Y_i, V^*_i = 1, S_i = 0\}, i \in \{n+N_0+1, ..., n+N_0+N_1\}\), where \(N = N_0 + N_1\). The former overlaps completely with the RCT dataset in terms of covariates, while the latter does not.

Similar to \(\psi\), we define \(\psi_0 = E \{Y(1) - Y(0) \mid V^* = 0\}\) as the ATE for the sub-population with \(V_i^* = 0\), and define \(\psi_1\) similarly for the sub-population with \(V_i^* = 1\). 

In the RCT dataset, we assume the following standard assumptions:
%\begin{assumption}{(Consistency)}\label{ass:1}
%$Y_i = Y_i(a)$, for $A_i = a$,  for every $a \in \mathcal{A}$.
%\end{assumption}

\begin{assumption}{(Ignorability in RCT)}\label{ass:1}
	$E\{Y(a) \mid X,S = 1, A = a\} = E\{Y(a) \mid X, S = 1\}$, for every $a \in \mathcal{A}$.
\end{assumption}

\begin{assumption}{(Positivity of treatment assignment probability in RCT)}\label{ass:2}
	$pr(A = a \mid X = x, S = 1) > 0$, for every $a \in \mathcal{A}$.
\end{assumption}

Assumption \ref{ass:1}  is satisfied under randomization, and  Assumption \ref{ass:2} requires each unit to have a positive probability in the treatment and control group. In addition, to utilize the RCT generalization methods and sensitivity analysis methods, we assume the mean generalizability assumption.

\begin{assumption}\label{ass:3}
	(Mean generalizability on common support) $E\{Y(a)\mid W = w, V = v, V^* = 0, S = 1\} = E\{Y(a)\mid W = w, V = v, V^* = 0, S = 0\}$, for every $a \in \mathcal{A}$.
\end{assumption} 

%\begin{assumption}{(Outcome overlap assumption)}\label{ass:4}
%	The support of the unobserved potential outcome is a subset of the support of the observed potential outcome: $pr(Y(a) \in \mathcal{Y} \mid A = 1 - a, X) > 0 \Rightarrow  pr\{Y(a) \in \mathcal{Y}\mid A = a, X \}> 0$. 
%\end{assumption}

Assumption \ref{ass:3} links the RCT population to the real-world trial-eligible population for generalization methods. It assumes that the conditional potential outcomes given covariates are identical between these populations.

\section{Examples of Sensitivity Analysis Methods in Observational Studies} \label{sec:3}

The section aims to introduce some existing sensitivity analysis methods, which will be used as examples in the following sections. 

\begin{example}{\citep{lu2023flexible}}\label{ex:1}
	To conduct a sensitivity analysis on the ATE, the sensitivity parameters are defined as:
	\begin{equation*}
		\epsilon^1(X) = \frac{E\{Y(1) \mid A = 1, X\}}{E\{Y(1)\mid A = 0, X\}} , \quad \epsilon^0(X) = \frac{E\{Y(0)\mid A = 1, X\}}{E\{Y(0)\mid A = 0, X\}}.
	\end{equation*}
	
	These sensitivity parameters range in the $\mathcal{L}^\infty$ space. If $\epsilon^1(X) = \epsilon^0(X) \equiv 1$, no unmeasured confounders are assumed. In practice, even if the sensitivity parameters are functions of $X$, they can be simplified to a square in $\mathbb{R} \times \mathbb{R}$ containing $\epsilon_1 = \epsilon_0 = 1$. Essentially, this involves choosing a class of constant functions that include $1$.
	
	Given $\epsilon^1(X)$ and $\epsilon^0(X)$, we can adjust existing ATE estimators to consistently account for unmeasured confounders. The modified formulas are presented in Table \ref{tab:1}.

\begin{table}
	\centering
	\begin{threeparttable}
		\caption{Outcome regression estimator, inverse probability weighting estimator, augmented inverse probability weighting estimator and their modified forms in \citet{lu2023flexible} }
		\label{tab:1}
		\begin{tabular}{ccc}
			%\hline
			Estimator & Version & Formula \\
			%\hline
			OR   & Original &   $\frac{1}{N}\sum_{i = 1}^{N}\{\hat{\mu}_1(X) - \hat{\mu}_0(X)\} $  \\
			& Modified & $\frac{1}{N}\sum_{i = 1}^{N}\left\{A Y+(1-A)\frac{\hat{\mu}_{1}(X)}{\varepsilon_{1}(X)}\right\}-\frac{1}{N}\sum_{i = 1}^{N} \left\{A \hat{\mu}_{0}(X) \varepsilon_{0}(X)+(1-A) Y\right\}$ \\
			IPW &  Original  & $\frac{1}{N}\sum_{i = 1}^{N}\left\{\frac{A}{\hat{e}(X)} Y  - \frac{1 - A}{1 - \hat{e}(X)}Y\right\}$  \\
			& Modified  &   $\frac{1}{N}\sum_{i = 1}^{N} \left\{\hat{w}_{1}(X) \frac{A}{\hat{e}(X)} Y\right\} - \frac{1}{N}\sum_{i = 1}^{N} \left\{\hat{w}_0(X)\frac{1 - A}{1 - \hat{e}(X)}Y\right\}$  \\
			AIPW	 &   Original   & $\frac{1}{N}\sum_{i = 1}^{N} \left\{\frac{AY}{\hat{e}(X)} - \frac{\{A - \hat{e}(X)\}\hat{\mu}_1(X)}{\hat{e}(X)} \right\} - \frac{1}{N}\sum_{i = 1}^{N} \left[\frac{(1 - A)Y}{1 - \hat{e}(X)} - \frac{\{\hat{e}(X) - A\}\hat{\mu}_0(X)}{1 - \hat{e}(X)} \right] $  \\
			& Modified & $\frac{1}{N}\sum_{i = 1}^{N}\left\{\hat{w}_1(X)\frac{AY}{\hat{e}(X)} - \frac{\{A - \hat{e}(X)\}\hat{\mu}_1(X)}{\hat{e}(X)\varepsilon_1(X)} \right\}- \frac{1}{N}\sum_{i = 1}^{N} \left[\hat{w}_0(X)\frac{(1 - A)Y}{1 - \hat{e}(X)} - \frac{\{A - \hat{e}(X)\}\hat{\mu}_0(X)\varepsilon_0(X)}{1 - \hat{e}(X)} \right]$\\
			%\hline
		\end{tabular}
	\end{threeparttable}
	\begin{tablenotes}
		\item[]OR, outcome regression; IPW, inverse probability weighting estimator; AIPW, augmented inverse probability weighting estimator.
		\item[]$\hat{\mu}_{a}, a \in \{0,1\}$ is the outcome model fitted on the group with $A = a$.
		\item[]$\hat{w}_1(X) = \hat{e}(X) + {(1 - \hat{e}(X))}/{\epsilon_1(X)}, \hat{w}_0(X) = \hat{e}(X)\epsilon_0(X) + 1 - \hat{e}(X)$.
	\end{tablenotes}
\end{table}
\end{example}

\begin{example}{\citep{cinelli2020making}} \label{ex:2}
	The framework uses two partial $R^2$ parameters to evaluate the bias of the confounded estimation. The sensitivity analysis parameters are defined as:
	\begin{equation*}
		R^2_{Y \sim U\mid A, X} = \frac{R^2_{Y \sim A+X+U} - R^2_{Y \sim A + X}}{1 - R^2_{Y \sim A + X}}, \quad R^2_{A \sim U\mid X} = \frac{R^2_{A \sim X+U} - R^2_{A \sim  X}}{1 - R^2_{A \sim X}} ,  
	\end{equation*}
	where $U$ denotes the unmeasured confounder, and the $R^2$ values on the right-hand side of the equations are used in linear regression. The advantage of this framework is that the sensitivity parameter is a coordinate in $[0,1] \times [0,1)$ and is inherently bounded. Thus, in the linear model, with a chosen pair of parameters, the absolute value of the bias of the estimated ATE can be represented as the following formula:
	\begin{equation*}
		|\widehat{bias}(R^2_{A \sim U|X}, R^2_{Y \sim U|A, X})| = se(\hat{\psi}_{res}) \left(\frac{R^2_{Y \sim U\mid A, X}R^2_{A \sim U\mid X} }{1 - R^2_{A \sim U\mid X}} df\right)^{1/2}, 
	\end{equation*}
	where $\widehat{bias}(\cdot,\cdot)$ is the bias function of the sensitivity parameters,  $\hat{\psi}_{res}$ is the linear regression estimator omitting the unmeasured confounders,  $se(\hat{\psi}_{res})$ is the standard error of the estimator, and $df$ is degree of freedom. With the confounded ATE and the bias, the true ATE can be obtained by adjusting the confounded ATE with the bias (plus or minus the absolute value of bias). 
	
	In \citet{chernozhukov2022long}, this framework is extended to non-linear models, following a procedure similar to the linear case. For generality, the subsequent analysis will primarily focus on the partial $R^2$ sensitivity parameters proposed in \citet{chernozhukov2022long} and can refer to Example \ref{ex:4} .
\end{example}

From the examples above, we introduce some additional notations in the general sensitivity analysis framework. For a specific sensitivity analysis method implemented on the OS dataset, denote $\epsilon$ as the sensitivity analysis parameter. We denote $\psi(\epsilon)$ as a function of $\epsilon$. So when $\epsilon$ is correctly specified, we have $\psi = \psi(\epsilon)$. Therefore, the estimand $\psi$ can be represented as $\psi(\epsilon)$. Similarly, for the sub-dataset of the OS data with $V^* = j, j \in \{0,1\}$, we can denote $\epsilon_j$ as the corresponding sensitivity parameters, and we define  $\psi_j(\epsilon_j)$ in the same way as $\psi(\epsilon)$.

\section{Methodology}

\subsection{A Synthesis Formulation}
Our parameter of interest is the ATE in the real-world population $\psi$. By the law of total expectation, we can separate $\psi$ as:
\begin{equation}\label{eq:1}
	\psi = p_0\psi_0  + p_1\psi_1,
\end{equation}
where $p_0 = pr(V^* = 0)$  and $p_1 = 1-p_0 = pr(V^* = 1)$  represent the probabilities of a unit meeting the inclusion/exclusion criteria, respectively. And we call $p_0$ as RCT-overlapping proportion. Because the OS samples are drawn randomly from the real-world population, these probabilities can be approximated by \( N_0/N \) and \( N_1/N \), denoted as \( p_0^N \) and \( p_1^N \).

The above equation motivates us to integrate  RCT data and OS data by combining RCT generalization techniques and sensitivity analysis to estimate the ATE for the real-world population. 
Specifically, we employ RCT generalization methods to estimate \( \psi_0 \) and use sensitivity analysis to determine bounds for \( \psi_1 \).  Denote $\hat{\psi}_{gen}$ as the RCT generalization estimator which extends the RCT results to the real-world trial-eligible population and $\hat{\psi}_{1}(\epsilon_1)$ as the estimator for $\psi_1(\epsilon_1)$. We   combine these two estimators and get the synthesis sensitivity analysis estimator $\hat{\psi}_{syn}(\epsilon_1)$:
% The synthesis sensitivity analysis estimator $\hat{\psi}_{syn}(\epsilon_1)$ is:
\begin{equation*}
	\hat{\psi}_{syn}(\epsilon_1) = p_0^N\hat{\psi}_{gen} + p_1^N \hat{\psi}_{1}(\epsilon_1).
\end{equation*}
Similar to $\hat{\psi}_{1}(\epsilon_1)$, the corresponding sensitivity analysis estimators for $\psi_0(\epsilon_0)$ and $\psi(\epsilon)$ are represented as \( \hat{\psi}_0(\epsilon_0) \) and \( \hat{\psi}(\epsilon) \). Depending on the specific RCT generalization method chosen for practical analysis, $\hat{\psi}_{gen}$ can have different expressions. We review several RCT generalization methods, which are summarized in the following Table \ref{tab:2} for clarity.

\begin{table}[h]
	\centering
	\begin{threeparttable}
		\caption{Examples of RCT generalization methods and corresponding expressions of $\hat{\psi}_{gen}$}
		\label{tab:2}
		\begin{tabular}{cc}
			RCT generalization method	&  $\hat{\psi}_{gen}$\\
			OM	&  $\frac{1}{N_0}\sum_{i = n+1}^{n+N_0}\{\hat{\mu}_{1,1}(X_i) - \hat{\mu}_{0,1}(X_i)\}$ \\
			IPSW & $\frac{1}{n}\sum_{i = 1}^{n} \frac{n}{N_0}\frac{Y_i}{\hat{\alpha}(X_i)}\left\{\frac{A_i}{\hat{e}(X_i)} - \frac{1 - A_1}{1 - \hat{e}(X_i)}\right\}$ \\
			AIPSW& $\frac{1}{n}\sum_{i = 1}^{n} \frac{n}{N_0}\frac{1}{\hat{\alpha}(X_i)}\left[\frac{A_i\{Y_i - \hat{\mu}_{1,1}(X_i)\}}{\hat{e}(X_i)} - \frac{(1 - A_1)\{Y_i - \hat{\mu}_{0,1}(X_i)\}}{1 - \hat{e}(X_i)}\right] $ \\
			& $+ \frac{1}{N_0}\sum_{i = n+1}^{n+N_0}\{\hat{\mu}_{1,1}(X_i) - \hat{\mu}_{0,1}(X_i)\}$ \\
			%			CW & $\sum_{i = 1}^{n}\hat{w}_i(X_i)Y_i\left\{\frac{A_i}{\hat{e}(X_i)} - \frac{1 - A_i}{1 - \hat{e}(X_i)}\right\}$ \\
			%			ACW & $\sum_{i = 1}^{n}\hat{w}_i(X_i)\left[\frac{A_i\{Y_i - \hat{\mu}_{1,1}(X_i)\}}{\hat{e}(X_i)} - \frac{(1 - A_1)\{Y_i - \hat{\mu}_{0,1}(X_i)\}}{1 - \hat{e}(X_i)}\right] $ \\
			%			& $+ \frac{1}{N_0}\sum_{i = n+1}^{n+N_0}\{\hat{\mu}_{1,1}(X_i) - \hat{\mu}_{0,1}(X_i)\}$ \\
		\end{tabular}
		\begin{tablenotes}
			\item[] OM, outcome model-based estimator; IPSW, inversed probability of sampling weighting estimator; AIPSW, augmented inversed probability of sampling weighting estimator.%; CW, calibration weighting estimator; ACW, augmented calibration weighting estimator.
			\item[] $\hat{\mu}_{a,1}, a\in \{0,1\}$ is the outcome model fitted on the RCT data.
			\item[] $\hat{\alpha}(X)$ is an estimate of the odds of the probability of being in the RCT.  
			\item[] $\hat{e}(X)$ is the estimated propensity score in the RCT. If the patients follow perfect randomization, then $\hat{e}(X)$ is known.
			%	\item[] $h(X)$ is a vector of functions of $X$ to be calibrated (e.g. moments, interactions, etc). $\tilde{h} = N_0^{-1}\sum_{i = n+1}^{n+N_0}h(X_i)$ is a consistent estimator of $E\{h(X)\}$ in the observational study.
			%	\item[] $\hat{w}_i$ is the estimation of $w_i$, which is the solution to the optimization $\min_{w_i,i \in \{1,...,n\}} \sum_{i = 1}^{n}w_i \text{log} w_i$, subject to $w_i \geq 0, \sum_{i = 1}^nw_i=1$ and $\sum_{i = 1}^{n}w_ih(X_i) = \tilde{h}$. 
		\end{tablenotes}
	\end{threeparttable}
\end{table}

First, we will discuss the consistency and asymptotic normality of the synthesis estimator with a specified sensitivity parameter in the following theorem.

\begin{theorem}\label{thm:1}
	For a specified $\epsilon_1$, assume that $n/N \rightarrow \rho$, where $\rho$ is a constant.  Under Assumptions \ref{ass:1} - \ref{ass:3}, $\hat{\psi}_{syn}(\epsilon_1)$ has the following properties as n $\rightarrow$ $\infty$:
	
	(i). \textbf{Consistency} $\hat{\psi}_{syn}(\epsilon_1) \to \psi(\epsilon_1)$ in probability. Especially, if $\epsilon_1$ is correctly specified, i.e. $\psi_1(\epsilon_1) = \psi_1$, then $\hat{\psi}_{syn}(\epsilon_1) \to \psi$ in probability.
	
	%	(ii). \textbf{Efficiency} $var(\hat{\psi}_{syn}(\epsilon_1)) \to p_0^2 var(\hat{\psi}_{gen}) + p_1^2 var(\hat{\psi}(\epsilon_1))$ in probability. The influence function of $\psi(\epsilon_1)$ is $\varphi(\epsilon_1) = p_0 \varphi_{gen} + p_1 \varphi_1(\epsilon_1)$, where $\varphi_{gen}$ and $\varphi_1(\epsilon_1)$ are influence functions of the two estimation target on both sub-datasets, respectively. 
	
	(ii). \textbf{Asymptotic Normality} For a specified $\epsilon_1$, $ {N}^{1/2} (\hat{\psi}_{syn} - \psi) \to \mathcal{N}[0, p_0(1-p_0)\{\psi_0-\psi_1(\epsilon_1)\}^2 + p_0\sigma_0^2 + p_1 \sigma_1^2]$ in distribution, where $N_0^{1/2}(\hat{\psi}_{gen} - \psi_0) \rightarrow N(0, \sigma_0^2)$ and $N_1^{1/2}(\hat{\psi}_1(\epsilon_1) - \psi_0) \rightarrow N(0, \sigma_1^2)$.
\end{theorem}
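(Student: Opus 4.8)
The plan is to handle the two parts separately, leaning on the convergence-rate hypotheses bundled into the statement, namely $N_0^{1/2}(\hat{\psi}_{gen}-\psi_0)\to N(0,\sigma_0^2)$ and $N_1^{1/2}(\hat{\psi}_1(\epsilon_1)-\psi_1(\epsilon_1))\to N(0,\sigma_1^2)$ (which encode the validity of the chosen generalization and sensitivity estimators under Assumptions \ref{ass:1}--\ref{ass:3}), together with the mixing identity \eqref{eq:1}. For part (i), I would first note that $N_0$ is a $\mathrm{Binomial}(N,p_0)$ count, so the weak law of large numbers gives $p_0^N=N_0/N\to p_0$ and $p_1^N\to p_1$ in probability, while the rate hypotheses give $\hat{\psi}_{gen}\to\psi_0$ and $\hat{\psi}_1(\epsilon_1)\to\psi_1(\epsilon_1)$ in probability. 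Applying Slutsky's theorem to the sum-of-products form $\hat{\psi}_{syn}(\epsilon_1)=p_0^N\hat{\psi}_{gen}+p_1^N\hat{\psi}_1(\epsilon_1)$ yields the limit $p_0\psi_0+p_1\psi_1(\epsilon_1)=\psi(\epsilon_1)$; when $\epsilon_1$ is correctly specified, $\psi_1(\epsilon_1)=\psi_1$ and \eqref{eq:1} collapses this to $\psi$.

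For part (ii), the engine is an algebraic decomposition. Writing $\psi=p_0\psi_0+p_1\psi_1(\epsilon_1)$ and using $p_1^N-p_1=-(p_0^N-p_0)$, I would rearrange to obtain
\[
N^{1/2}(\hat{\psi}_{syn}-\psi)=p_0^N\,N^{1/2}(\hat{\psi}_{gen}-\psi_0)+p_1^N\,N^{1/2}\{\hat{\psi}_1(\epsilon_1)-\psi_1(\epsilon_1)\}+N^{1/2}(p_0^N-p_0)\{\psi_0-\psi_1(\epsilon_1)\}.
\]
The three summands carry the three variance contributions. Converting rates through $N/N_0\to 1/p_0$ turns the first summand into an asymptotically $N(0,p_0\sigma_0^2)$ term (since $p_0^N\to p_0$ multiplies a $N(0,\sigma_0^2/p_0)$ limit), and $N/N_1\to 1/p_1$ turns the second into $N(0,p_1\sigma_1^2)$; the central limit theorem for the binomial proportion gives $N^{1/2}(p_0^N-p_0)\to N(0,p_0(1-p_0))$, so the third summand contributes $p_0(1-p_0)\{\psi_0-\psi_1(\epsilon_1)\}^2$. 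Summing these reproduces the stated variance, once the three pieces are shown to be jointly normal with vanishing cross-covariances.

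That joint step is the crux, and I would establish it by conditioning on the indicator vector $\{V_i^*\}$, equivalently on the split $(N_0,N_1)$. Given the split, $\hat{\psi}_{gen}$ and $\hat{\psi}_1(\epsilon_1)$ are built from disjoint data---the RCT sample together with the $V^*=0$ observational subsample, versus the $V^*=1$ observational subsample---hence conditionally independent, which annihilates the covariance of the first two summands. Because each of those summands is conditionally (asymptotically) mean zero given the split, an iterated-expectation argument forces its covariance with the third summand, a deterministic function of the split, to vanish as well; this is exactly a law-of-total-variance bookkeeping in which $p_0\sigma_0^2+p_1\sigma_1^2$ is the expected within-split variance and $p_0(1-p_0)\{\psi_0-\psi_1(\epsilon_1)\}^2$ is the variance of the across-split mean.

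The delicate point, and the step I expect to demand the most care, is passing the conditional normal limits and conditional variances through uniformly over the random, $p_0$-concentrating split, while simultaneously reconciling the two sample-size rates via $n/N\to\rho$ so that the RCT-driven and observational-study-driven fluctuations inside $\hat{\psi}_{gen}$ aggregate into the single constant $\sigma_0^2$. A Slutsky and Cram\'er--Wold combination would then assemble the joint limit from the three marginal limits and the vanishing covariances, delivering the claimed asymptotic variance.
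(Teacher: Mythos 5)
Your proposal is correct and follows essentially the same route as the paper: the paper expands $N^{1/2}(\hat{\psi}_{syn}-\psi)$ into the same three pieces (a binomial-fluctuation term carrying $p_0(1-p_0)\{\psi_0-\psi_1(\epsilon_1)\}^2$, plus the two rescaled estimator fluctuations carrying $p_0\sigma_0^2$ and $p_1\sigma_1^2$), and your decomposition agrees with theirs up to $o_p(1)$. The only difference is that you explicitly justify the asymptotic independence of the three summands by conditioning on the split $(N_0,N_1)$, a step the paper's proof asserts implicitly; this is a welcome addition rather than a divergence in method.
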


The proof of consistency in Theorem \ref{thm:1} is straightforward. According to the RCT generalization method, $\hat{\psi}_{gen}$ is consistent for $\psi$ if the models for nuisance parameters are correctly specified, i.e., $\hat{\psi}_{gen} \rightarrow \psi_0$. For $\hat{\psi}_{1}(\epsilon_1)$, with correctly specified $\epsilon_1$ (denoted as $\epsilon_1^{*}$), we have $ \hat{\psi}_1(\epsilon_1^{*}) \rightarrow \psi_1$. Thus, $\hat{\psi}_{syn}(\epsilon_1^*) \rightarrow \psi$ based on Equation \eqref{eq:1}. 
% Since the two sub-datasets are independent, the variance of $\hat{\psi}_{syn}(\epsilon_1)$ is the linear combination of $\hat{\psi}_{gen}$ and $\hat{\psi}_{1}(\epsilon_1)$.  
The asymptotic property in the theorem is proved in the supplemental material.

\subsection{Decomposition of Sensitivity Parameters}

The synthesis method prompts us to explore the relationship between sensitivity parameters for the super-population and its sub-populations. If we can express the  sensitivity parameter for the super-population as a function of the parameters on sub-populations, we can impose additional constraints on the sensitivity parameter for the RCT-overlapping sub-population using $\hat{\psi}_{gen}$. This relationship, referred to as “separability”, is introduced in the following definition.

\begin{definition}{(Separable Sensitivity Parameter)}\label{def:1}
	% A sensitivity parameter is separable if it can be expressed as a function of the sensitivity parameters on sub-datasets and some non-identifiable nuisance parameters. 
	Suppose $\epsilon$ is the sensitivity parameter for the super-population, and $\epsilon_j, j \in \{0,1\}$ is the sensitivity parameter for the sub-population with $V^\ast = j$. We assume $\epsilon_0$ and $\epsilon_1$ are variationally independent and denote $\mathcal{E}, \mathcal{E}_0, \mathcal{E}_1$ as the sensitivity parameter space for $\epsilon, \epsilon_0, \epsilon_1$, respectively. 
	% This definition can be formalized as follows: 
	The sensitivity parameter $\epsilon$ is called separable if for any $\epsilon \in \mathcal{E}$, there exists a function $g$ such that:
	\begin{equation}\label{eq:2}
		\epsilon = g(\epsilon_0,\epsilon_1;\theta),  
	\end{equation}
	for $\epsilon_j \in \mathcal{E}_j$, where $\theta$ is a non-identifiable nuisance parameter that is variationally independent of the sensitivity parameters $\epsilon_0, \epsilon_1$. The function $g$ is called the transformation function.  So $\mathcal{E} = \{ \epsilon\mid \epsilon = g(\epsilon_0,\epsilon_1;\theta), \epsilon_0 \in \mathcal{E}_0, \epsilon_1 \in \mathcal{E}_1, \theta \in \mathcal{M} \}$. 
	% For a specific sensitivity analysis method, if the sensitivity parameters are separable,  we call this method as separable sensitivity method.
\end{definition}

Under a nonparametric model for the super-population, $\epsilon_0$ and $\epsilon_1$ are defined in two disjoint sub-populations, therefore they are variationally independent. However, when the super-population has some structural restrictions, $\epsilon_0$ and $\epsilon_1$ may not be variationally independent, then Definition \ref{def:1} is not satisfied. 
% In some cases, there is no nuisance parameter $\theta$ in Equation \ref{eq:2}, Sometimes, the nuisance parameter can be unknown as it includes information about unmeasured confounders. 
The existence of non-identifiable nuisance parameter $\theta$ may complicate our sensitivity analysis. 
So it's necessary to verify that the sub-population parameters $\epsilon_0, \epsilon_1$ and the nuisance parameter $\theta$ are variationally independent, which means that $\theta$ has the same support when $\epsilon_0, \epsilon_1$ change. When the separability assumption holds, $\theta$ can be treated as a fixed value while varying the sub-population sensitivity parameters $\epsilon_0, \epsilon_1$ within their spaces. 

% Then we can focus on the sensitivity parameters on the sub-datasets with a fixed transformation function, which is less complex than a varying one. 

The transformation function in Equation \eqref{eq:2} also depends on the proportion of each sub-population. Therefore, it is better to revise the notation $g$ in Equation \eqref{eq:2} to $g_p$, where $p = (p_0, p_1)$ is a vector representing the proportions of the sub-populations. Then, it is trivial that:
\begin{equation*}
	\lim_{p_j \to 1} \epsilon = \lim_{p_j \to 1} g_p(\epsilon_0, \epsilon_1; \theta) = \epsilon_j, \quad j \in \{0,1\}.
\end{equation*}

%To verify the separability of a sensitivity parameter, we can prove that it can be expressed as a function of parameters on the sub-populations and some nuisance parameters.

\subsection{Examples of Separable Sensitivity Analysis Parameters}

The following two examples show that the sensitivity parameters in the sensitivity analysis methods in Section \ref{sec:3} are both separable.
\begin{example}{(Example \ref{ex:1}, continued)}\label{ex:3}
	Since the sensitivity parameters are functions of $X$ only, and are separated by $V^*$, which is an indicator dependent on $X$, we have:
	\begin{equation*}
		\epsilon^1(X) = \epsilon_0^1(X)I(V^* = 0) + \epsilon_1^1(X)I(V^* = 1), \quad  \epsilon^0(X) = \epsilon_0^0(X)I(V^* = 0) + \epsilon_1^0(X)I(V^* = 1),
	\end{equation*}
	where $\epsilon_j^1(X)$ and $\epsilon_j^0(X), j \in \{0,1\}$ are sensitivity parameters (functions) on the sub-populations with $V^* = j$, respectively. Since the two sub-populations are mutually disjoint, the two parameters are separable mutually. In particular, there is no nuisance parameter $\theta$ in this example, which simplifies the verification of Definition \ref{def:1}. %Further, we assume $\epsilon_j^1(X)$ and $\epsilon_j^1(X)$ are from their own functional spaces in $\mathcal{L}^\infty$, denoted as  $\mathcal{F}_j^0$ and $\mathcal{F}_j^1$, with $\mathcal{F}^0$ and $\mathcal{F}^1$ being the spaces of sensitivity parameters $\epsilon^0$ and $\epsilon^1$,respectively. Then, the sensitivity analysis procedure is equavilent as ranging $(\epsilon_0^0)$
\end{example}

\begin{example}{(Example \ref{ex:2}, continued)}\label{ex:4}
	We use another definition of $R^2_{Y \sim U\mid A, X}$ and $R^2_{A \sim U\mid X}$ proposed in \cite{chernozhukov2022long} to separate the sensitivity parameters in Example \ref{ex:2}. Denote  $R^2_{Y \sim U\mid A, X,V^* = j}$ and $R^2_{A \sim U\mid  X, V^* = j}$, $j \in \{0,1\}$ as sensitivity parameters on the sub-population with $V^\ast = j$, and we have the following results:
	\begin{equation*}
		\begin{aligned}
			R^2_{Y \sim U\mid A, X} & = \frac{var\{E(Y\mid U,A,X)\} - var\{E(Y\mid A,X)\} }{var(Y) - var\{E(Y\mid A,X)\}}  \\
			& = \frac{\sum_{j \in \{0,1\}} p_j\frac{R^2_{Y \sim U\mid A, X,V^* = j}}{\Delta_{yj}} + \frac{var[E\{E(Y\mid U,A,X)\mid V^*\}] - var[E\{E(Y\mid A,X)\mid V^*\}]}{\prod_{j \in \{0,1\}}\Delta_{yj}}}{\sum_{j \in \{0,1\}} p_j\frac{1}{\Delta_{yj}} + \frac{var\{E(Y\mid V^*)\} - var[E\{E(Y\mid A,X)\mid V^*\}]}{\prod_{j \in \{0,1\}}\Delta_{yj}}},
		\end{aligned}
	\end{equation*}
	\begin{equation*}
		\begin{aligned}
			R^2_{A \sim U\mid  X} & = \frac{var\{E(A\mid U,X)\} - var\{E(A\mid X)\} }{var(A) - var\{E(A\mid X)\}}  \\
			& = \frac{\sum_{j \in \{0,1\}} p_j\frac{R^2_{A \sim U\mid X,V^* = j}}{\Delta_{aj}} + \frac{var[E\{E(A\mid U,X)\mid V^*\}] - var[E\{E(A\mid X) \mid V^*\}]}{\prod_{j \in \{0,1\}}\Delta_{aj}}}{\sum_{j \in \{0,1\}} p_j\frac{1}{\Delta_{aj}} + \frac{var\{E(A\mid V^*)\} - var[E\{E(A\mid X)\mid V^*\}]}{\prod_{j \in \{0,1\}}\Delta_{aj}}},
		\end{aligned}
	\end{equation*}
	where $\Delta_{yj} = var(Y\mid  V^* = 1- j) -  var\{E(Y\mid A,X)\mid  V^* = 1- j\}$ and $\Delta_{aj} = var(A\mid  V^* = 1- j) -  var\{E(A\mid X)\mid  V^* = 1 - j\}$.  However, there is an unknown nuisance parameter $var[E\{E(Y\mid U,A,X)\mid V^*\}]$, which is not identifiable if $U$ is unknown.
\end{example}

The two examples illustrate two cases where the nuisance parameter exists or not.
In the former case, the sensitivity parameter is a function of $X$, 
allowing it to be separated by the indicator function $V^*$ for the sub-populations. 
A notable feature of this parameter is that there is a one-to-one correspondence between the sensitivity parameter on the super-population and those on the sub-populations. 
However, in the latter case, this property does not hold; the sensitivity parameters on the sub-populations cannot be directly identified from the sensitivity parameters on the super-population due to the existence of nuisance parameter.

\subsection{Sensitivity Analysis Bounds}

In this section, we aim to compare the sensitivity analysis method using only OS data with the synthesis sensitivity analysis method in Section \ref{sec:2} when the sensitivity parameter satisfies Definition \ref{def:1}.
For a sensitivity analysis method for ATE, the bound width is defined as the width of ATE bound when the sensitivity parameters vary within their spaces, that is,
\begin{equation*}
	W = \sup_{\epsilon \in \mathcal{E}} \hat{\psi}(\epsilon) - \inf_{\epsilon \in \mathcal{E}} \hat{\psi}(\epsilon).
\end{equation*}

When RCT data is available, it is natural to improve the sensitivity analysis with only OS data by adding restrictions on the sensitivity parameter in RCT-overlapping sub-population, which induces the RCT-enhanced sensitivity analysis estimator defined as:
\begin{equation*}
	\hat{\psi} \circ g_p(\epsilon_0, \epsilon_1; \theta), \quad \text{subject to} \quad \hat{\psi}_0(\epsilon_0) = \hat{\psi}_{gen}.
\end{equation*}

Moreover, denote $\epsilon_0^*$ as the solution to the restriction above, this estimator can be represented as $\hat{\psi }\circ g_p(\epsilon_0^*, \epsilon_1; \theta)$.  For the sensitivity analysis method whose sensitivity parameters satisfy Definition \ref{def:1}, we can express the estimator as $\hat{\psi} \circ g_p(\epsilon_0, \epsilon_1; \theta)$, where $(\epsilon_0, \epsilon_1) \in \mathcal{E}_0 \times \mathcal{E}_1$. However, for the RCT-enhanced method, the coordinate $(\epsilon_0^*, \epsilon_1)$ is in the space $ \mathcal{E}_0^{'} \times \mathcal{E}_1$, where $\mathcal{E}_0^{'} \subset \mathcal{E}_0$. Therefore, we introduce the following lemma comparing the bounds between the sensitivity analysis method with only OS data and the corresponding RCT-enhanced estimator. The lemma is stated as follows:

\begin{lemma}\label{lem:1}
	Suppose the sensitivity parameter $\epsilon$ for the super-population satisfies Definition \ref{def:1}. For the  sensitivity analysis estimator $\hat{\psi}(\epsilon)$ and the RCT-enhanced sensitivity analysis estimator $\hat{\psi}\circ g_p(\epsilon_0^*, \epsilon_1; \theta)$, the following conclusions hold:
	
	(i).$\inf_{\epsilon \in \mathcal{E}} \hat{\psi}(\epsilon) \leq \inf_{(\epsilon_0^*,\epsilon_1) \in  \mathcal{E}_0^{'} \times \mathcal{E}_1} \hat{\psi} \circ g_p(\epsilon_0^*, \epsilon_1) \leq  \sup_{(\epsilon_0^*,\epsilon_1) \in  \mathcal{E}_0^{'} \times \mathcal{E}_1} \hat{\psi} \circ g_p(\epsilon_0^*, \epsilon_1) \leq \sup_{\epsilon \in \mathcal{E}} \hat{\psi}(\epsilon)$.
	
	(ii).$W_r \leq W_g$, where $W_{g}$ and $W_{r}$ as the bound widths of the sensitivity analysis method and the RCT-enhanced method, respectively.
\end{lemma}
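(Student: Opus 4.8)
The plan is to reduce both claims to an elementary monotonicity property of suprema and infima over nested sets, with the separability assumption (Definition \ref{def:1}) doing the work of identifying the relevant sets. First I would use separability to rewrite the objective of the OS-only method in terms of the sub-population parameters. Since $\mathcal{E} = \{g_p(\epsilon_0,\epsilon_1;\theta) : \epsilon_0\in\mathcal{E}_0, \epsilon_1\in\mathcal{E}_1, \theta\in\mathcal{M}\}$, ranging $\epsilon$ over $\mathcal{E}$ is equivalent to ranging the triple $(\epsilon_0,\epsilon_1,\theta)$ over $\mathcal{E}_0\times\mathcal{E}_1\times\mathcal{M}$, and consequently $\hat{\psi}(\epsilon) = \hat{\psi}\circ g_p(\epsilon_0,\epsilon_1;\theta)$. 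Thus the set of attainable ATE values for the OS-only method, say $\mathcal{S}_g := \{\hat{\psi}\circ g_p(\epsilon_0,\epsilon_1;\theta) : (\epsilon_0,\epsilon_1,\theta)\in\mathcal{E}_0\times\mathcal{E}_1\times\mathcal{M}\}$, coincides with $\{\hat{\psi}(\epsilon):\epsilon\in\mathcal{E}\}$.

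Next I would identify the value set for the RCT-enhanced method. The restriction $\hat{\psi}_0(\epsilon_0) = \hat{\psi}_{gen}$ carves out the subset $\mathcal{E}_0' = \{\epsilon_0\in\mathcal{E}_0 : \hat{\psi}_0(\epsilon_0) = \hat{\psi}_{gen}\}\subseteq\mathcal{E}_0$, and the corresponding value set is $\mathcal{S}_r := \{\hat{\psi}\circ g_p(\epsilon_0^*,\epsilon_1;\theta) : (\epsilon_0^*,\epsilon_1,\theta)\in\mathcal{E}_0'\times\mathcal{E}_1\times\mathcal{M}\}$. Because $\mathcal{E}_0'\times\mathcal{E}_1\times\mathcal{M}$ is a subset of $\mathcal{E}_0\times\mathcal{E}_1\times\mathcal{M}$ and both $\mathcal{S}_r$ and $\mathcal{S}_g$ are images of these two domains under the \emph{same} map $(\epsilon_0,\epsilon_1,\theta)\mapsto\hat{\psi}\circ g_p(\epsilon_0,\epsilon_1;\theta)$, the inclusion $\mathcal{S}_r\subseteq\mathcal{S}_g$ follows immediately.

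With the inclusion in hand, both parts are direct. For (i), for any nonempty bounded sets $A\subseteq B\subseteq\mathbb{R}$ one has $\inf B\le\inf A\le\sup A\le\sup B$; applying this with $A=\mathcal{S}_r$ and $B=\mathcal{S}_g$ yields exactly the chain of inequalities in the statement. For (ii), combining the two outer inequalities $\sup\mathcal{S}_r\le\sup\mathcal{S}_g$ and $\inf\mathcal{S}_r\ge\inf\mathcal{S}_g$ gives $W_r = \sup\mathcal{S}_r - \inf\mathcal{S}_r \le \sup\mathcal{S}_g - \inf\mathcal{S}_g = W_g$.

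I do not expect a deep obstacle here; \textbf{the only genuine care-points are the well-definedness conditions} that make the infima and suprema meaningful. Nonemptiness of $\mathcal{S}_r$ requires $\hat{\psi}_{gen}$ to lie in the range of $\epsilon_0\mapsto\hat{\psi}_0(\epsilon_0)$ over $\mathcal{E}_0$, which is precisely the regime in which the RCT-enhancement is informative, and boundedness of $\mathcal{S}_r$ is inherited from that of $\mathcal{S}_g$. I would also remark that the argument is insensitive to whether $\theta$ is held fixed or allowed to vary over $\mathcal{M}$, since it is the subset relation $\mathcal{E}_0'\subseteq\mathcal{E}_0$ that drives everything while $\epsilon_1$ and $\theta$ range over identical spaces in both problems.
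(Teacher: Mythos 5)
Your proposal is correct and follows essentially the same route as the paper: both arguments use separability to identify $\sup_{\epsilon\in\mathcal{E}}\hat{\psi}(\epsilon)$ with the supremum of $\hat{\psi}\circ g_p$ over $\mathcal{E}_0\times\mathcal{E}_1$, note that the RCT restriction shrinks the domain to $\mathcal{E}_0'\times\mathcal{E}_1$ with $\mathcal{E}_0'\subseteq\mathcal{E}_0$, and conclude by monotonicity of $\sup$ and $\inf$ over nested sets. Your added remarks on nonemptiness of $\mathcal{E}_0'$ and the role of $\theta$ are sensible refinements the paper leaves implicit, but they do not change the argument.
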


To further analyze the synthesis method, we need to find the relationship between the RCT-enhanced sensitivity analysis estimator and the synthesis sensitivity analysis estimator, which is in the following theorem.

\begin{theorem}\label{thm:2}
	Suppose the sensitivity parameter $\epsilon$ for the super-population satisfies Definition \ref{def:1}. For the RCT-enhanced sensitivity analysis estimator $\hat{\psi} \circ g_p(\epsilon_0^*, \epsilon_1; \theta)$ and the synthesis sensitivity analysis estimator $\hat{\psi}_{syn}(\epsilon_1) = p_0^N\hat{\psi}_{gen} + p_1^N \hat{\psi}_1(\epsilon_1)$, we have the following conclusions:
	
	(i).If $\epsilon$ is a function of covariates (i.e. $\epsilon = \epsilon(X)$) and $\hat{\psi}(\epsilon) = n^{-1}\sum_{i = 1}^{n}\{f \circ \epsilon(X)\}$, where $f$ is a known function. Then for certain $\epsilon_0^*$ and $\epsilon_1$, the two estimators are asymptotically equivalent, i.e. $\hat{\psi} \circ g_p(\epsilon_0^*, \epsilon_1; \theta) = \hat{\psi}_{syn} + o_p(1)$.
	
	(ii).For a sensitivity analysis method with only OS data, in which $\epsilon$ is the sensitivity parameter, the equation proposed in (i) does not necessarily hold. However, for a given pair of parameters $(\epsilon_0^*, \epsilon_1)$ on the left-hand side, it is always feasible to find a $\epsilon_1^{'}$ on the right-hand side to make the equation in (i) hold. For convenience, we denote the set of $\epsilon_1^{'}$ satisfying the equation above as $\mathcal{E}_{sol}$.
\end{theorem}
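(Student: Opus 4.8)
The plan is to exploit the additive, sample-average form of $\hat{\psi}(\epsilon)$ together with the fact that, when $\epsilon$ depends on $X$ only, the separation in Definition~\ref{def:1} is realized by the indicator $V^*$ exactly as in Example~\ref{ex:3}. For part (i), I would first write $g_p$ in its explicit form $\epsilon(X) = \epsilon_0(X)I(V^*=0) + \epsilon_1(X)I(V^*=1)$, so that for each OS unit the summand $f\circ\epsilon(X_i)$ collapses to $f\circ\epsilon_0(X_i)$ when $V_i^*=0$ and to $f\circ\epsilon_1(X_i)$ when $V_i^*=1$. Splitting the OS sample average $\hat{\psi}(\epsilon)=N^{-1}\sum_i f\circ\epsilon(X_i)$ along $V_i^*$ then yields the exact identity $\hat{\psi}\circ g_p(\epsilon_0,\epsilon_1;\theta)=p_0^N\hat{\psi}_0(\epsilon_0)+p_1^N\hat{\psi}_1(\epsilon_1)$, where $p_j^N=N_j/N$ and $\hat{\psi}_j(\epsilon_j)=N_j^{-1}\sum_{i:V_i^*=j}f\circ\epsilon_j(X_i)$ is precisely the subpopulation sensitivity estimator.

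Next I would substitute the defining constraint of the RCT-enhanced estimator. Because $\epsilon_0^*$ is chosen so that $\hat{\psi}_0(\epsilon_0^*)=\hat{\psi}_{gen}$, replacing $\epsilon_0$ by $\epsilon_0^*$ in the identity above converts the first term into $p_0^N\hat{\psi}_{gen}$, giving $\hat{\psi}\circ g_p(\epsilon_0^*,\epsilon_1;\theta)=p_0^N\hat{\psi}_{gen}+p_1^N\hat{\psi}_1(\epsilon_1)=\hat{\psi}_{syn}(\epsilon_1)$. The leading identity is exact for a known $f$; the $o_p(1)$ remainder only has to absorb two negligible sources of error — the discrepancy between nuisance components fitted on the full OS sample versus those refitted on each subsample (both converging to the same limits under the consistency conditions behind Table~\ref{tab:2}), and any approximation incurred when the constraint $\hat{\psi}_0(\epsilon_0^*)=\hat{\psi}_{gen}$ is solved only up to vanishing error. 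I would check each is $o_p(1)$ using the stated rate $n/N\to\rho$ and the convergence of $\hat{\psi}_{gen}$ and $\hat{\psi}_0$.

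For part (ii), the additive collapse fails because either $\epsilon$ is not a pure function of $X$ (as with the partial-$R^2$ parameters and their nuisance $\theta$ of Example~\ref{ex:4}) or $f$ does not distribute over the $V^*$ split, so $\hat{\psi}\circ g_p(\epsilon_0^*,\epsilon_1;\theta)$ need not equal $p_0^N\hat{\psi}_{gen}+p_1^N\hat{\psi}_1(\epsilon_1)$. I would recast the claim as a solvability statement: fixing the left-hand value $c:=\hat{\psi}\circ g_p(\epsilon_0^*,\epsilon_1;\theta)$, seek $\epsilon_1'$ with $p_0^N\hat{\psi}_{gen}+p_1^N\hat{\psi}_1(\epsilon_1')=c$, that is $\hat{\psi}_1(\epsilon_1')=(c-p_0^N\hat{\psi}_{gen})/p_1^N$. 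Since $\hat{\psi}_{gen}$ is independent of $\epsilon_1'$, existence reduces to showing the target on the right lies in the range of the map $\epsilon_1'\mapsto\hat{\psi}_1(\epsilon_1')$; this I would obtain from continuity of $\hat{\psi}_1$ in its sensitivity parameter together with connectedness of $\mathcal{E}_1$ via an intermediate-value argument, and then define $\mathcal{E}_{sol}$ as the preimage of the target.

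The main obstacle is this range/surjectivity step: one must argue that as $\epsilon_1'$ sweeps $\mathcal{E}_1$ the estimator $\hat{\psi}_1(\epsilon_1')$ actually attains the required value $(c-p_0^N\hat{\psi}_{gen})/p_1^N$. I expect to handle it by noting that $\mathcal{E}_1$ is taken rich enough — e.g.\ a connected family of constant sensitivity parameters around the no-confounding value — that $\hat{\psi}_1$ sweeps a nondegenerate interval of ATE values, and that $c$, being assembled from the same subpopulation machinery, falls inside that interval. The delicate point is excluding boundary cases in which the target sits just outside the attainable range, which I would control through the variational independence of $\epsilon_1$ from $\theta$ furnished by Definition~\ref{def:1}.
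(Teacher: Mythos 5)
Your proposal is correct, and for part (i) it takes a somewhat different and in fact sharper route than the paper. The paper argues at the population level: it computes $\psi\circ g_p(\epsilon_0^*,\epsilon_1;\theta)=E[f\circ\{\epsilon_0^*(X)I(V^*=0)+\epsilon_1(X)I(V^*=1)\}]=p_0\psi_0+p_1\psi_1(\epsilon_1)$ and then concludes asymptotic equivalence because both estimators converge in probability to this common limit. You instead split the sample average itself along $V_i^*$ to obtain the exact finite-sample identity $\hat{\psi}\circ g_p(\epsilon_0,\epsilon_1;\theta)=p_0^N\hat{\psi}_0(\epsilon_0)+p_1^N\hat{\psi}_1(\epsilon_1)$ and then substitute the constraint $\hat{\psi}_0(\epsilon_0^*)=\hat{\psi}_{gen}$; this yields equality up to the (correctly flagged) discrepancy between nuisance components fitted on the full sample versus the subsamples, which is a real issue the paper glosses over entirely. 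Both arguments hinge on the same two facts: the decomposition $\epsilon(X)=\epsilon_0(X)I(V^*=0)+\epsilon_1(X)I(V^*=1)$ from Example \ref{ex:3} and the distribution of $f$ over the indicator partition. For part (ii) your skeleton coincides with the paper's: define the target value $(c-p_0^N\hat{\psi}_{gen})/p_1^N$ and seek $\epsilon_1'$ with $\hat{\psi}_1(\epsilon_1')$ hitting it. The paper simply \emph{asserts} that such an $\epsilon_1'$ exists (it writes $\hat{\psi}\circ g_p(\epsilon_0^*,\epsilon_1;\theta)=p_0\psi_0+p_1\psi_1(\epsilon_1')+o_p(1)$ without justification), whereas you correctly identify that the substantive content is a range/surjectivity claim for the map $\epsilon_1'\mapsto\hat{\psi}_1(\epsilon_1')$ and propose closing it via continuity plus an intermediate-value argument. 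That is a more honest treatment; just be aware that the continuity and richness conditions on $\mathcal{E}_1$ you would need are not stated in the theorem, so your proof of (ii) would require adding them as explicit hypotheses rather than deriving them from Definition \ref{def:1} alone.
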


Theorem \ref{thm:2} serves as a bridge for comparing our synthesis method with the sensitivity analysis method with only OS data. According to the theorem, we can claim that if we choose the range of $\epsilon_1^{'}$ as a subset (or equal) to $\mathcal{E}_{sol}$, we can ensure that the synthesis estimator will have a tighter bound than the sensitivity analysis estimator with only OS data. The following corollary formally states the comparison results.

\begin{corollary}\label{cor:1}
	For a sensitivity analysis method $\hat{\psi}(\epsilon)$ with $\epsilon$ as the sensitivity parameter for the super-population satisfying Definition \ref{def:1}, and the corresponding synthesis method $\hat{\psi}_{syn}(\epsilon_1)$, where $\epsilon \in \mathcal{E}$ and $\epsilon_1 \in \mathcal{E}^{sol}$ as proposed in Theorem \ref{thm:2}, denote $W_{s}$ and $W_{g}$ as the bound widths estimated by the two methods. We have the following conclusions:
	
	(i). $W_s \leq W_g$, for a given $p_0 \in (0, 1)$.
	
	(ii). $\lim_{p_0 \to 0} W_s/W_g = 1$, and $\lim_{p_0 \to 1} W_s/W_g = 0$.
\end{corollary}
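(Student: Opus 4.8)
The plan is to reduce Corollary~\ref{cor:1} to the two facts already in hand: the range identity of Theorem~\ref{thm:2} and the nesting inequality of Lemma~\ref{lem:1} for part (i), and then to read off the two boundary limits from the additive structure of the synthesis estimator together with the limiting behavior $\lim_{p_j\to 1} g_p(\epsilon_0,\epsilon_1;\theta)=\epsilon_j$ recorded after Definition~\ref{def:1}. Throughout I would impose the mild regularity that the parameter spaces $\mathcal{E}_0,\mathcal{E}_1$ are compact and that $\hat{\psi}$ depends continuously on $\epsilon$, so that every supremum and infimum defining a width is attained and the $o_p(1)$ discrepancy in Theorem~\ref{thm:2} does not disturb the width comparison in the limit.

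For part (i), I would first invoke Theorem~\ref{thm:2}(ii): as $(\epsilon_0^*,\epsilon_1)$ ranges over the constrained space $\mathcal{E}_0'\times\mathcal{E}_1$, every RCT-enhanced value $\hat{\psi}\circ g_p(\epsilon_0^*,\epsilon_1)$ is matched by a synthesis value $\hat{\psi}_{syn}(\epsilon_1')$ with $\epsilon_1'\in\mathcal{E}_{sol}$. Hence the range of $\hat{\psi}_{syn}$ taken over $\epsilon_1\in\mathcal{E}^{sol}\subseteq\mathcal{E}_{sol}$ is contained in (and, for $\mathcal{E}^{sol}=\mathcal{E}_{sol}$, equals) the range of the RCT-enhanced estimator, so $W_s\le W_r$. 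Lemma~\ref{lem:1}(ii) then supplies $W_r\le W_g$, and chaining the two inequalities gives $W_s\le W_g$ for every fixed $p_0\in(0,1)$.

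For part (ii), I would exploit that, as $\epsilon_1$ varies, only the second summand of $\hat{\psi}_{syn}(\epsilon_1)=p_0^N\hat{\psi}_{gen}+p_1^N\hat{\psi}_1(\epsilon_1)$ moves, so $W_s=p_1^N\{\sup_{\epsilon_1\in\mathcal{E}^{sol}}\hat{\psi}_1(\epsilon_1)-\inf_{\epsilon_1\in\mathcal{E}^{sol}}\hat{\psi}_1(\epsilon_1)\}$; write the bracketed sub-population width as $W_1^{sol}$, and let $W_0,W_1$ denote the widths of $\hat{\psi}_0$ over $\mathcal{E}_0$ and of $\hat{\psi}_1$ over $\mathcal{E}_1$. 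As $p_0\to 1$ we have $p_1^N\to 0$ with $W_1^{sol}$ bounded, so $W_s\to 0$; meanwhile $g_p\to\epsilon_0$ reduces the OS-only estimator $\hat{\psi}(\epsilon)$ to $\hat{\psi}_0(\epsilon_0)$ ranging over $\mathcal{E}_0$, giving $W_g\to W_0>0$ and thus $W_s/W_g\to 0$. As $p_0\to 0$ we have $p_1^N\to 1$, so $\hat{\psi}_{syn}(\epsilon_1)\to\hat{\psi}_1(\epsilon_1)$, while $g_p\to\epsilon_1$ makes $\hat{\psi}(\epsilon)\to\hat{\psi}_1(\epsilon_1)$ as well; because the vanishing RCT weight renders the constraint $\hat{\psi}_0(\epsilon_0)=\hat{\psi}_{gen}$ inactive, $\mathcal{E}^{sol}$ fills out $\mathcal{E}_1$, both widths converge to the common value $W_1$, and $W_s/W_g\to 1$.

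The main obstacle is the general separable case of Example~\ref{ex:4}, where the nuisance parameter $\theta$ is present and $g_p$ is not a simple $p$-weighted average. There I must verify that the limits $g_p\to\epsilon_j$ hold \emph{uniformly} over the parameter spaces, so that interchanging the limit in $p_0$ with the sup/inf defining the widths is legitimate, and that $\mathcal{E}^{sol}$ (taken equal to the full solution set $\mathcal{E}_{sol}$) approaches $\mathcal{E}_1$ as $p_0\to 0$ so that numerator and denominator in part (ii) share a common limiting width. I would also flag the nondegeneracy hypothesis $W_0>0$, that the sub-population-$0$ sensitivity analysis is genuinely informative, without which the ratio in the $p_0\to 1$ limit is undefined.
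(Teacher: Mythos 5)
Your proposal is correct and follows essentially the same route as the paper: part (i) by chaining the range identification of Theorem \ref{thm:2}(ii) with the nesting inequality of Lemma \ref{lem:1}(ii), and part (ii) by combining the additive form $\hat{\psi}_{syn}=p_0^N\hat{\psi}_{gen}+p_1^N\hat{\psi}_1(\epsilon_1)$ with the limiting behavior $\lim_{p_j\to 1}g_p(\epsilon_0,\epsilon_1;\theta)=\epsilon_j$. The only difference is that you make explicit the regularity and nondegeneracy conditions (compactness, continuity, uniformity of the $g_p$ limits, and $W_0>0$ so the ratio is defined as $p_0\to 1$) that the paper's proof uses implicitly.
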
 

Although we have proven the existence of $\mathcal{E}_{sol}$, a set of sensitivity parameters for the synthesis method that results in tighter bounds than the sensitivity analysis method using only OS data, finding this set in practice is infeasible for two reasons. Firstly, the transformation function in Definition \ref{def:1} contains unknown nuisance parameters, making it difficult to identify the range for $\epsilon_0$ and $\epsilon_1$ from a given range of $\epsilon$. Secondly, in the RCT-enhanced sensitivity analysis method, solving for the range of $\epsilon_0^{*}$ using the restriction in the RCT-enhanced estimator is complex. Specifically, under some trivial cases as in Theorem \ref{thm:2} (i),  $\mathcal{E}_{sol}$ can be easily identified. Example \ref{ex:1}, corresponding to the special case in Theorem \ref{thm:2} (i), illustrates this process.

Although $\mathcal{E}_{sol}$ cannot be identified in general, Corollary \ref{cor:1} helps determine if a certain range of $\epsilon_1^{'}$ is appropriate. The procedure involves (1). using the sensitivity analysis method on the OS data only to calculate a bound width $W_0$ for a specified range of $\epsilon$, and 
(2). setting a range $\mathcal{E}_{test}$ for $\epsilon_1^{'}$, and using the synthesis method to calculate the ATE bound, whose width is denoted as $W_{test}$. If $W_{test} > W_0$, conclude that $\mathcal{E}_{sol} \subset \mathcal{E}_{test}$ and shrink $\mathcal{E}_{test}$ until $W_{test} \leq W_0$. Although we may not precisely identify $\mathcal{E}_{sol}$, achieving a tighter bound is sufficient.

An example of shrinking $\mathcal{E}_{test}$ can be found in \citet{cinelli2020making}, where the confounding strength of benchmark variables is used to bound the confounding strength of the unmeasured confounder. The sensitivity parameters are bounded by  $R^2_{A \sim U\mid X} = k_A \{R^2_{A \sim X_j\mid X_{-j}}/(1 - {R^2_{A \sim X_j\mid X_{-j}}})\}$, and $R^2_{Y \sim U\mid A,X} \leq \eta^2 \{R^2_{Y \sim X_j\mid A, X_{-j}}(1 - {R^2_{Y \sim X_j\mid A, X_{-j}}})\}$
, where $\eta$ is a parameter including $k_Y$. $k_A$ and $k_Y$ represent the relative confounding strength of the unmeasured confounder compared to a benchmark variable. The following example applies this benchmark variable bounding technique to test the refinement of the synthesis method.

\begin{example}{(Example \ref{ex:2}, continued)}\label{ex:5}
	Assume the sensitivity parameter pair  $( R^2_{A \sim U\mid X}, R^2_{Y \sim U\mid A, X}) \in [0, a] \times [0, b]$, where $ 0 <a<1$ and $ 0<b \leq 1$. From the bias formula in Example \ref{ex:2}, the bound width is  $2 |\widehat{bias}(a,b)|$, where $\widehat{bias}(\cdot, \cdot)$ is the bias function defined in Example \ref{ex:2}. The following procedure graph in Table \ref{tab:3} describes the comparison.
\begin{table}[h]
	\centering
	\begin{threeparttable}
		\caption{Procedure of adjusting the range of sensitivity parameters in the synthesis method based on example \ref{ex:2}}
		\label{tab:3}
		\begin{tabular}{ll}
			\textbf{Set}& $k_A$ and $k_Y$ (usually equal to 1) \\
			\textbf{Do} & Choose a benchmark variable $X_j$: \\
			& \textit{On the entire OS data:} 
			$a =  k_Af^2_{R^2_{A \sim X_j\mid X_{-j}}}, b =  k_Yf^2_{R^2_{Y \sim X_j\mid A, X_{-j}}}, W_1 = 2 |\widehat{bias}(a,b)|$  \\
			& \textit{On the non-overlapping data:} 
			$a_1 =  k_Af^2_{R^2_{A \sim X_j\mid X_{-j}}}, b_1 =  k_Yf^2_{R^2_{Y \sim X_j\mid A, X_{-j}}},W_2 = 2 p_1 |\widehat{bias}(a_1,b_1)|$  \\
			& $ratio = \frac{W_2}{W_1}$ \\
			\textbf{If}  & $ratio \leq 1:$ \\
			& \quad Synthesis method is better, use $W_2$ \\
			\textbf{Else} & \quad Find additional benchmark variables, or use $W_1$ \\
		\end{tabular}
		\begin{tablenotes}
			\item[]$f^2_{R^2} = \frac{R^2}{1-R^2}$.
		\end{tablenotes}
	\end{threeparttable}
\end{table}
\end{example}

\section{Simulations and Application}
\subsection{Simulation Setting}
We design two experiments to compare the sensitivity analysis bounds between the sensitivity analysis method using only OS data and the synthesis sensitivity analysis method. Both the RCT and OS samples are assumed to be drawn from a super-population of size $N = 5 \times 10^5$.

\textbf{Scenario I}: We assume the covariate is a 5-dimensional vector $X = (X_1, X_2, X_3, X_4, X_5)$, where $X_j \sim \mathcal{N}(10, 4^2)$, and $X_5$
is influenced by inclusion/exclusion criteria. The OS sample is randomly sampled from the super-population, while the RCT membership variable $S$ follows the distribution:
\begin{equation*}
	pr(S = 1\mid X) = 
	\begin{cases}
		exp(-3 + 0.5X_1 -0.3 X_2 - 0.5X_3 - 0.4X_4 + 0.1 X_5), & \text{ if } \Phi(X_5) \leq q \\
		0. &\text{ otherwise } 
	\end{cases}
\end{equation*}
where $\Phi(\cdot)$ is the cumulative distribution function of $X_5$ and $q$ is the threshold, representing the RCT-overlapping proportion.

In the OS data, we assume an unmeasured confounder $U \sim \mathcal{N}(10, 3^2)$. The treatment assignment mechanism is $pr(A = 1\mid X,U) = exp(0.5X_1 - 0.5X_2 - 0.3 X_3 +0.5X_4 - 0.3X_5 + 0.2 U)$. In contrast, $A \sim Ber(0.5)$ in the RCT data due to perfect randomization. In both datasets, the average treatment effect is $\tau = 4E(X_5)$, but the outcome differs.  In RCT, $Y = 1 + A\tau + 1.5 X_2 + 2X_3+ 2X_4 + 0.5 X_5 + \mathcal{N}(0, 3^2)$, while in the OS data, $Y = 1 +  A\tau + 1.5 X_2 + 2X_3+ 2X_4 + 0.5 X_5+ 3 U +  \mathcal{N}(0, 3^2)$.

\textbf{Scenario II}: We assume a 3-dimensional covariate vector 
$X = (X_1, X_2, X_3)$, where $X_j \sim \mathcal{N}(10,2^2)$. Here, 
$X_2$ and $X_3$ are influenced by RCT inclusion/exclusion restrictions, such that the RCT membership variable $S$ follows the distribution:
\begin{equation*}
	pr(S = 1\mid X) = 
	\begin{cases}
		exp(-2 + 0.4X_1 -0.3 X_2 - 0.5X_3), & \text{ if } \Phi(X_2) \leq q \quad \text{and} \quad \Phi(X_3) \leq q\\
		0.&\text{otherwise}
	\end{cases}
\end{equation*}

In Scenario II, $\Phi(\cdot)$ and $q$ are defined similarly as in Scenario I. In the RCT data, $A \sim Ber(0.5)$, and the outcome model is $Y = 1 + 10A+ X_1 + 1.5 X_2 + 2X_3 + \mathcal{N}(0, 1)$. In the OS data, the treatment assignment mechanism is  $pr(A = 1\mid X) = exp(0.5X_1 - 0.6X_2 + 0.3 X_3 +0.5U)$ and the outcome model is $Y = 1 + 10A+ X_1 + 1.5 X_2 + 2X_3 + 0.5 U + \mathcal{N}(0, 1)$, where $U \sim \mathcal{N}(0, 4^2)$.

We use the following comparative simulations to demonstrate the advantage of our synthesis method (denoted as 'OS+RCT') over the  sensitivity analysis method using only OS data(denoted as 'OS') in narrowing the bound width. The simulations focus on the following two cases in each scenario:(i). with a fixed $q = 0.7$, compare the bound width between the two methods. (ii). range $q \in (0, 1)$, calculate the ratio of bound width (bound width of synthesis method over sensitivity analysis method)

In Scenario I, we use the methodology in Example \ref{ex:1}, and the methodology in Example \ref{ex:2} is applied in Scenario II. We use the bootstrap method to calculate the standard deviation of the upper and lower bounds.

\subsection{Results}

\textbf{Scenario I}: For a fixed $q = 0.7$, we set the sensitivity parameters $(\epsilon^0, \epsilon^1) \in [0.9, 1.1] \times [0.9,1.1]$, and the corresponding  $(\epsilon^0_1, \epsilon^1_0) \in [0.9, 1.1] \times [0.9,1.1]$. Our sensitivity analysis methods in Example \ref{ex:1} include the modified outcome regression estimator, Horvitz–Thompson-type inverse propensity score weighting estimator, Hajek-type estimator, and double robust estimator. In the synthesis method, we use the outcome-model-based estimator, inverse probability of sampling estimator , and augmented inverse probability of sampling estimator to generalize RCT results to the RCT-eligible real-world population.

The comparison results, including the mean and standard deviation (in the brackets) of the lower and upper bounds and the mean of bound width, are presented in Table \ref{tab:4}. The synthesis estimator combining RCT data narrows the sensitivity analysis bound width in all cases. However, when the RCT generalization method is IPSW or the sensitivity analysis method is HT, the synthesis method does not always cover the true value with the same sensitivity parameter range. This is because these methodologies are unstable, and extreme weights may occur, resulting in a large standard deviation.

\begin{table}[h]
	\centering
	
	\begin{threeparttable}
		\caption{Sensitivity analysis bounds comparison between synthesis method and original method based on modified estimators in Example \ref{ex:1}}
		\label{tab:4}
			\begin{tabular}{cccccc}
			&	& OS & OS+RCT(OM) & OS+RCT(IPSW) & OS+RCT(AIPSW) \\
			%\hline
			& lb & 33.65(0.76) &36.92(1.12) &28.23(9.01) &36.94(1.09) \\
			OR  & ub &57.44(0.68) &45.23(1.19) &36.58(9.06) &45.30(1.16) \\
			& mbw & 23.79&8.32 &8.35 & 8.36\\
			& lb &35.23(6.78) &31.83(4.27) & 22.98(10.10)& 31.86(4.38)\\
			HT  & ub & 54.25(6.35) &38.03(4.13) &29.16(9.99) & 38.05(4.17)\\
			& mbw & 19.01&6.20 &6.18 &6.19 \\
			& lb & 32.49(1.30)&36.60(1.17) &29.49(9.33) & 36.51(1.12)\\
			Haj  & ub &53.80(1.16) &43.63(1.17) &36.56(7.07) &43.59(1.22) \\
			& mbw &21.31 &7.02 & 7.08& 7.08\\
			& lb &33.94(0.98) &37.14(1.20) &28.54(9.13) & 36.97(1.05)\\
			DR  & ub & 57.93(0.94)&45.51(1.26) &36.92(9.14) & 45.39(1.17)\\
			& mbw &23.99 &8.37 &8.37 & 8.42\\
			%	\hline
		\end{tabular}
	\end{threeparttable}
	\begin{tablenotes}
		\item OR: Outcome regression estimator; HT, Horvitz–Thompson-type inverse propensity score weighting estimator; Haj, Hajek-type inverse propensity score weighting estimator; DR, double robust estimator; lb, lower bound; ub, upper bound; mbw, mean bound width. Other abbreviations refer to Table \ref{tab:1}.
	\end{tablenotes}
\end{table}

When ranging $q$ in $[0,1]$, the bound width ratio also changes. The trend in Figure \ref{fig:1} aligns with our intuition. When $q \to 0$, it indicates that no RCT data can be integrated into the synthesis method, making the two sensitivity analysis methods equivalent. Consequently, the ratio tends to 1. Conversely, when $q \to 1$, the ATE can be consistently estimated from RCT data using the generalization method, eliminating the need for sensitivity analysis. Therefore, the ratio is 0.

\begin{figure}[h]
	\centering
	\includegraphics[width=0.6\linewidth]{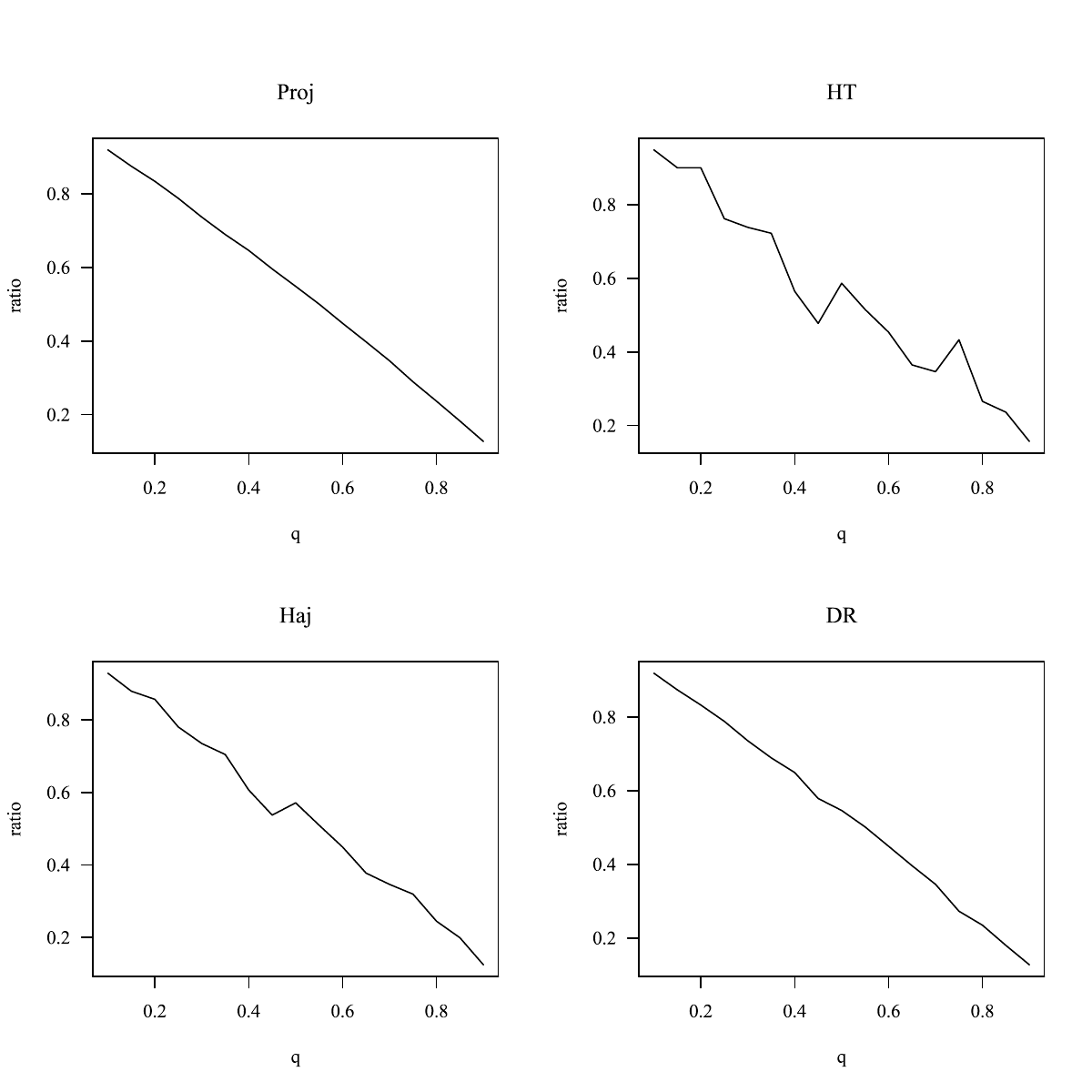}
	\caption{Sensitivity analysis bound width ratio with a varing RCT-overlap proportion.}
\label{fig:1}
\end{figure}

\textbf{Scenario II}: In this scenario, with a fixed $q = 0.7$, we apply the synthesis of sensitivity analysis methods based on the omitted variable bias framework introduced in Example \ref{ex:2} and RCT generalization methods to estimate the upper and lower bounds, as well as the mean bound width. Here, we use $X_1$
as the benchmarking variable to bound the confounding strength of the unmeasured confounder, following the procedure in Example \ref{ex:5}. We set $k_D = 6$ and $k_Y = 1$ to bound the confounding strength. Additionally, we consider fixed value bounds for the confounding strength with $R^2_{A \sim U\mid X,V^* = j} \leq 0.1$ and $R^2_{A \sim U\mid X,V^* = j} \leq 0.8$. For varying $q \in [0,1]$, we set  $R^2_{A \sim U\mid X,V^* = j} \leq 0.1$ and $R^2_{A \sim U\mid X,V^* = j} \leq 0.9$, then calculate the theoretical parameter bound on the entire OS data using the transformation function proposed in Definition \ref{def:1}. Therefore the bound width ratio of the synthesis method over the general method is calculated. All results containing a table of bound width for a fixed $q$ and a plot of bound width ratio for a varying $q$ are relegated into the supplemental material due to length limitation of the paper, which aligns with our intuition.

\subsection{Application: Comparison of Effectiveness Between Two BP-lowering Drugs}

We perform a sensitivity analysis to estimate the efficacy difference in lowering blood pressure between two antihypertension drugs: Songling Xuemaikang Capsule (SXC), a traditional Chinese medicine, and Losartan, a widely accepted drug. A comparative-effectiveness RCT with 602 participants assessed their efficacy by measuring reductions in systolic (SBP) and diastolic blood pressure (DBP). The RCT found that SXC was non-inferior to Losartan in patients with grade I hypertension \citep{lai2022efficacy, lai2023propensity}. However, the RCT included only patients with grade I hypertension aged 60 or younger.In contrast, the OS data was collected from 3000 hypertensive patients without these restrictions. In the OS data, patients taking SXC were compared to those on other antihypertensive medications. Both datasets recorded baseline SBP and DBP (BSBP and BDBP), as well as follow-up measures at weeks 2, 4, 6, and 8. Efficacy was defined as the difference between follow-up and baseline measures (DSBP and DDBP). Since the control group in the OS used Losartan or similar medications, we assume their outcomes reflect what would have been observed if they had used Losartan.

Among the 3000 patients in the OS data, 1160 were excluded based on RCT inclusion/exclusion criteria. We aim to conduct a sensitivity analysis on the efficacy comparison at week 8 for the entire OS population. This analysis utilizes the omitted variable bias framework from Example \ref{ex:2}. To integrate the RCT data, we use the augmented inverse probability of sampling weighting estimator (AIPSW estimator) to generalize the RCT results to the overlapping real-world population.

We compare SBP reduction efficacy using the benchmark bounding technique from \citet{cinelli2020making} to evaluated unmeasured confounder strength. According to covariates benchmarking bounding results in the supplementary material, we use BSBP as the bound for the unmeasured confounding strength conservatively.  Results for the  sensitivity analysis method using only the OS data and our synthesis method are presented in Figure \ref{fig:2}, with detailed results in the supplementary material. The former method yields a lower bound of -3.55 (95\% CI: [-4.43, -2.63]), an upper bound of 4.96 (95\% CI: [4.07, 5.94]), and a mean bound width of 8.52. In comparison, our synthesis method provides a lower bound of -2.61 (95\% CI: [-3.96, -1.44]), an upper bound of 3.31 (95\% CI: [2.05, 4.47]), and a mean bound width of 5.92. Both methods indicate no statistically significant difference in SBP reduction between SXC and Losartan at week 8, but our synthesis method offers a narrower bound. As for DDBP at week 8, the comparison results imply the same conclusion, which is relegated to the supplemental material due to the limitation of the length of the paper.
\begin{figure}[h]
	% The arguments in the next line are {height}{optional width}{used only by OUP for typesetting} for figure empty box eg 
	%\figurebox{20pc}{25pc}{}
	%if actual size of graphics need plese see below command
	%\figurebox{}{}{}[fig1]
	%need to reducing the figure size use below command
	%\figuresize{.8}%
	%\figuresize{.4}
	%\figurebox{20pc}{}{}[art/cinelli_BSBP_OS.png] \figurebox{20pc}{}{}[art/cinelli_BSBP_OSRCT.png] 
	% note that files may not be rotated
	\centering
	\subfigure{\includegraphics[width=.4\textwidth]{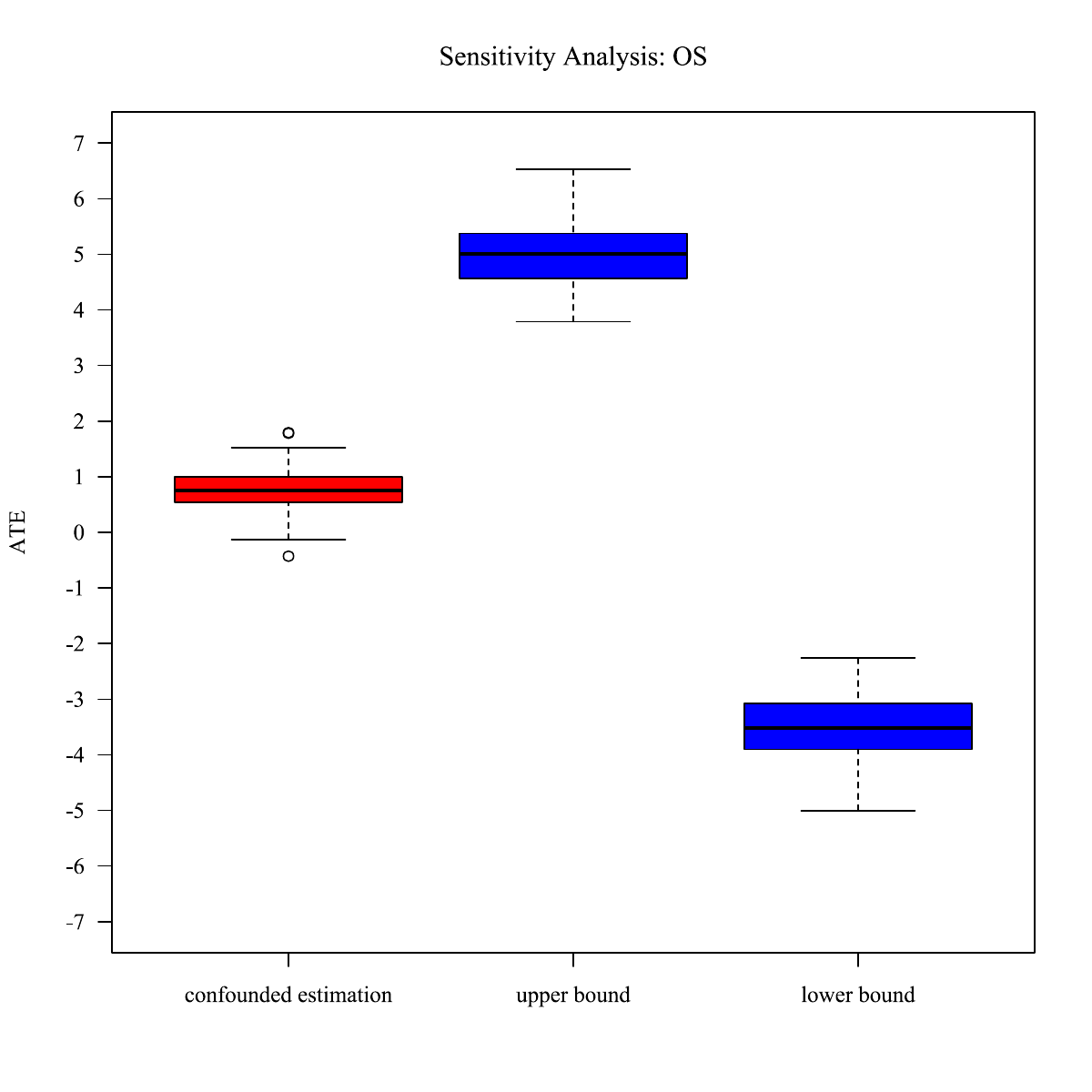}}
	\subfigure{\includegraphics[width=.4\textwidth]{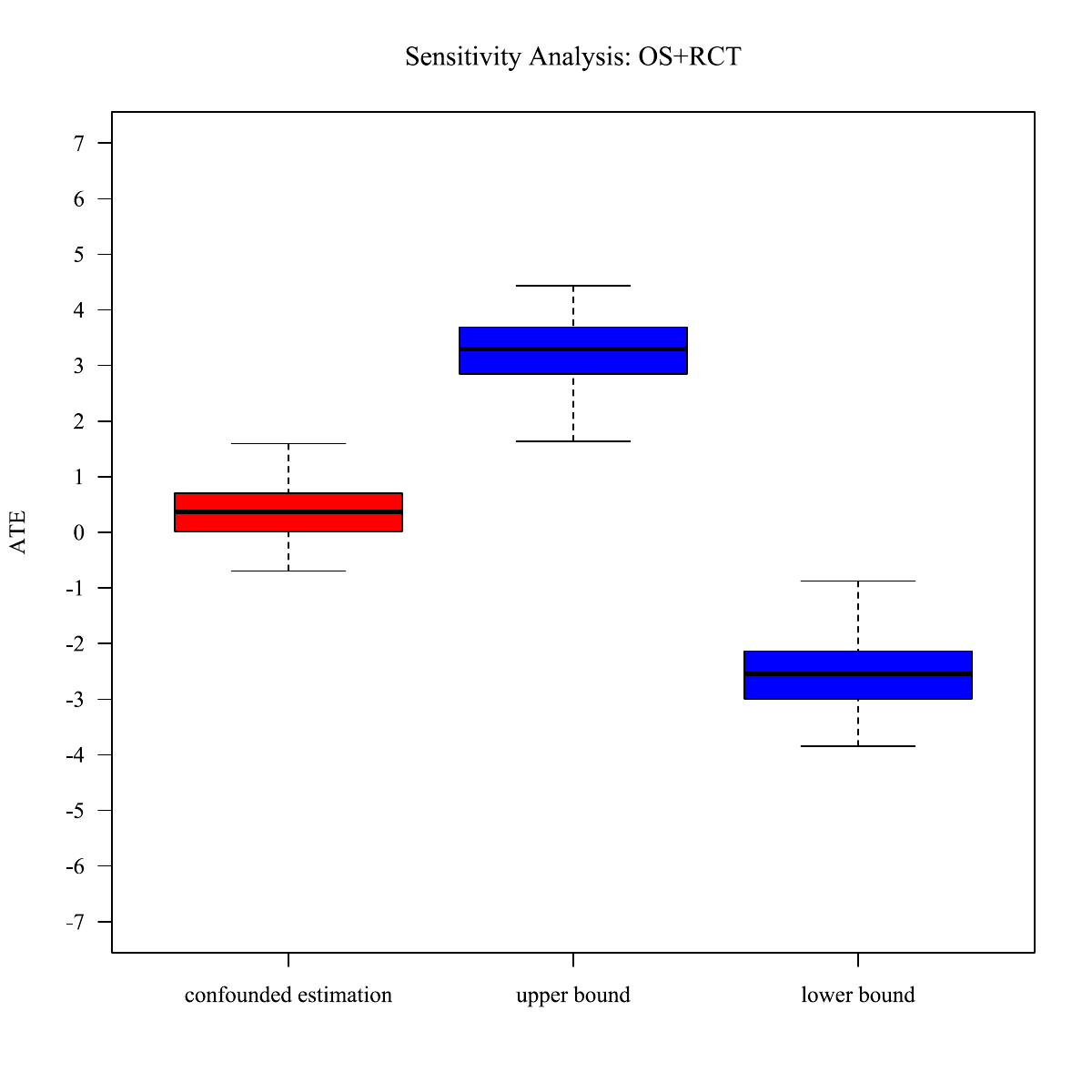}}
	\caption{Comparison of sensitivity analysis bounds between general sensitivity analysis method in Example \ref{ex:2} and synthesis method (RCT generalization method: AIPSW). Outcome: DSBP at week 8.}
	\label{fig:3}
\end{figure}

\section{Discussion}

In this paper, we integrate RCT generalization methods with sensitivity analysis techniques using both the RCT data and OS data. Our synthesis sensitivity analysis method offers two key advantages: firstly, it gives tighter bounds through integrating additional RCT data compared with the classical sensitivity analysis methods using only OS data;  besides, it addresses positivity assumption violations common in RCT generalization methods. 
% These methods assume all real-world patients have a positive probability of RCT selection, of which inclusion/exclusion criteria can violate. 
When this assumption is violated, the RCT sample represents only the trial-eligible population, not the entire population \citep{dahabreh2019generalizing, lee2023improving}. Our approach gives a partial solution for the violation of this assumption. However, our focus is on explicit inclusion/exclusion criteria set during RCT design. For implicit exclusions in covariates, distinguishing between RCT-represented and unrepresented populations requires further research \citep{parikh2024missing}.

The performance of the synthesis method depends on the RCT-overlapping proportion $p_0$ with the OS data. As indicated by Equation \eqref{eq:1}, if  $p_0 = 1$, sensitivity analysis is unnecessary because the positivity assumption is satisfied and the ATE can be consistently estimated. Conversely, if $p_0 = 0$, which means that  RCT data is not available, we can  only use sensitivity analysis methods in the observational data. Additionally, the choice of RCT generalization method affects the behavior of the bounds. According to Theorem \ref{thm:1}, the variance of the bound can vary based on the chosen RCT generalization method. For instance, using the IPSW estimator as the generalization method results in larger variance, as shown in the simulations. For more details on different RCT generalization methods, refer to \citet{dahabreh2019generalizing, lee2023improving}.

There are some other limitations in this work. Firstly, 
% the mean changeability 
Assumption \ref{ass:3} is critical for the consistency of $\hat{\psi}_{gen}$. If this assumption is violated, the estimator is inconsistent, complicating the sensitivity analysis and the evaluation of our synthesis method \citep{nguyen2018sensitivity,huang2022sensitivity,hartman2024sensitivity}. Secondly, in this paper we only consider sensitivity analysis methods with separable sensitivity parameters, which are only a subset of all possible sensitivity analysis methods. Thirdly, our discussion is limited to certain outcomes, excluding more complex cases like survival outcomes \citep{lee2022doublyrobustestimatorsgeneralizing,lee2024transportingsurvival}.  These topics remain for future researches.

\newpage

\bibliography{paper-ref-abb.bib}

\clearpage
\newpage
\setcounter{page}{1}

\appendix
\setcounter{lemma}{0}
\setcounter{equation}{0}
\setcounter{proposition}{0}
\setcounter{theorem}{0}
\setcounter{figure}{0}
\setcounter{table}{0}
\renewcommand{\thefigure}{B\arabic{figure}}
\renewcommand{\thetable}{B\arabic{table}}
\renewcommand{\theequation}{A\arabic{equation}}
\renewcommand{\thelemma}{A\arabic{lemma}}
\renewcommand{\thetheorem}{A\arabic{theorem}}
\renewcommand{\theproposition}{A\arabic{proposition}}

\setcounter{definition}{0}
\renewcommand{\thedefinition}{A\arabic{definition}}

\section*{Appendix}
The Supplemental Material is organized into two sections. Section \ref{sec:A} contains detailed proofs of the theorems and examples proposed in the main paper. Section \ref{sec:B} provides complementary simulations results and the real-data analysis results comparing the efficacy of SXC and Losartan, which were relegated from the main text due to length constraints.

\section{Technical Proofs}\label{sec:A}
\subsection{Proof of Theorem 1.(ii)}

For a correctly specified $\epsilon_1$,  we know that $\hat{\psi}_{syn}(\epsilon_1) \to \psi$ in probability, where $\psi = p_0\psi_0 + p_1\psi_1(\epsilon_1)$. We have:

\begin{equation*} 
	\begin{aligned}
		N^{1/2}\{\hat{\psi}_{syn}(\epsilon_1) - \psi\} &= N^{1/2} \left\{\frac{N_0}{N}\hat{\psi}_{gen} + \frac{N_1}{N}\hat{\psi}_1(\epsilon_1) - p_0 \psi_0 - p_1\psi_1(\epsilon_1)\right\}\\
		& = N^{1/2} \left\{(\frac{N_0}{N} - p_0) \hat{\psi}_{gen} + p_0(\hat{\psi}_{gen} - \psi_0)\right\}  \\
		&  \quad  + N^{1/2} \left[ (\frac{N_1}{N} - p_1)\hat{\psi}_1(\epsilon_1) + p_1\{\hat{\psi}_1(\epsilon_1) - \psi_1(\epsilon_1)\}   \right]  \\
		& = N^{1/2}(\frac{N_0}{N} - p_0)\{\hat{\psi}_{gen} - \hat{\psi}_1(\epsilon_1)\}   \\
		& \quad +  N^{1/2}/N_0^{1/2} p_0\times N_0^{1/2}(\hat{\psi}_{gen} - \psi_0) + N^{1/2}/N_1^{1/2} p_1 \times N_1^{1/2}\{\hat{\psi}_{1}(\epsilon_1) - \psi_1(\epsilon_1)\}  \\
		& \to \mathcal{N}[0, p_0(1 - p_0)\{\psi_0 - \psi_1(\epsilon_1)\}^2 + p_0\sigma_0^2 + p_1 \sigma_1^2].
	\end{aligned}
\end{equation*}
The third equation holds due to the fact that $p_0 + p_1 = 1$
and $N_0/N + N_1/N = 1$.

where $\sigma_0^2$ and $\sigma_1^2$ are the asymptotic variance of $\hat{\psi}_{gen}$ and $\hat{\psi}_1(\epsilon_1)$, respectively. The third equation holds due to the fact that $p_0 + p_1 = 1$
and $N_0/N + N_1/N = 1$.

\subsection{Proof of Example 4}
This subsection proves the separability of sensitivity analysis parameters of the omitted variable bias framework in \citet{cinelli2020making,chernozhukov2022long}. For a sub-dataset with $V^* = j$, $j \in \{0,1\}$, the sensitivity parameters on are:
\begin{equation}\label{eq:A1}
	R^2_{Y \sim U\mid A, X,V^* = j }= \frac{var\{E(Y\mid U,A,X,V^* = j)\mid V^* = j\} - var\{E(Y\mid A,X,V^* = j)\mid V^* = j\}}{var(Y \mid V^* = j) - var\{E(Y\mid A,X,V^* = j)\mid V^* = j\}}, 
\end{equation}

\begin{equation}\label{eq:A2}
	R^2_{A \sim U\mid X,V^* = j} = \frac{var\{E(A\mid U,X,V^* = j)\mid V^* = j\} - var\{E(A\mid X,V^* = j)\mid V^* = j\}}{var(Y \mid V^* = j ) - var\{E(A\mid X,V^* = j)\mid V^* = j\}}.
\end{equation}

For the proof, we need the following proposition firstly.

\begin{proposition} \label{pro:A1}
	For $j \in \{0,1\}$, the following equations hold:
	\begin{equation*}
		var\{E(Y\mid U,A,X)\mid V^* = j\} =  var\{E(Y\mid U,A , X , V^* = j) \mid V^* = j\},  
	\end{equation*}
	and 
	\begin{equation*}
		var\{E(Y\mid A,X)\mid V^* = j\} = var\{E(Y\mid A , X, V^* = j) \mid V^* = j\}.
	\end{equation*}
\end{proposition}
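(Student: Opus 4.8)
The plan is to exploit the single structural fact that $V^*$ is a \emph{deterministic function of the covariates} $X$. By construction $V^*$ is the indicator of whether the restricted covariates $V$ violate the known inclusion/exclusion criteria (for instance $V^* = I(\text{age} > 60)$), so $V^*$ is measurable with respect to $\sigma(X)$, and hence also with respect to $\sigma(A,X)$ and $\sigma(U,A,X)$. Everything in the proposition follows from this observation; there is no genuine analytic difficulty, only some care with notation.

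First I would invoke the elementary property of conditional expectation that appending a random variable already measurable with respect to the conditioning information does not change the conditional expectation. Applying this with the variable $V^*$ gives, almost surely,
\begin{equation*}
	E(Y\mid U,A,X) = E(Y\mid U,A,X,V^*) \quad\text{and}\quad E(Y\mid A,X) = E(Y\mid A,X,V^*).
\end{equation*}
Next I would restrict to the event $\{V^* = j\}$. On that event the random variable $E(Y\mid U,A,X,V^*)$ agrees pointwise with $E(Y\mid U,A,X,V^*=j)$, the conditional expectation taken within the sub-population, since fixing $V^*=j$ is precisely the restriction that defines this sub-population. Because $var\{\,\cdot\mid V^* = j\}$ depends only on the values its argument takes on $\{V^* = j\}$, combining this with the displayed equalities yields the first claimed identity; the identical argument applied to $E(Y\mid A,X)$ yields the second.

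The step requiring the most care—rather than being a true obstacle—is the interpretation of the notation $E(\,\cdot\mid\cdots,V^*=j)$ as conditioning jointly on the listed variables and on the event $\{V^*=j\}$, together with the verification that, on that event, this coincides with the form in which $V^*$ does not appear. Once the $\sigma(X)$-measurability of $V^*$ is recognized, the proposition reduces to this bookkeeping, and it is exactly the bridge needed to rewrite the super-population quantities $var\{E(Y\mid U,A,X)\}$ and $var\{E(Y\mid A,X)\}$ in terms of the sub-population sensitivity parameters of Equations \eqref{eq:A1}--\eqref{eq:A2}.
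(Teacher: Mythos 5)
Your proof is correct and takes essentially the same route as the paper's: both arguments rest entirely on the single fact that $V^*$ is a deterministic function of $X$, hence $\sigma(U,A,X,V^*)=\sigma(U,A,X)$, so that $E(Y\mid U,A,X,V^*)=E(Y\mid U,A,X)$ almost surely and restriction to the event $\{V^*=j\}$ recovers the sub-population conditional expectation before taking $var\{\cdot\mid V^*=j\}$. The paper arrives at the same identity by an explicit density calculation showing $E(Y\mid U,A,X)=\sum_{j}E(Y\mid U,A,X,V^*=j)I(V^*=j)$ via $pr(V^*=j\mid U,A,X)=pr(V^*=j\mid X)=I(V^*=j)$, which is just a concrete rendering of your measurability argument.
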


\begin{proof}

	For the first equation, we have:
	\begin{equation*}
		\begin{aligned}
			E(Y\mid U = u,A = a,X = x] &= \int y pr(y\mid u,a,x) dy \\
			& = \int \sum_{j = 0}^1 y pr(y\mid u,a,x,V^*= j) pr(V^* = j\mid U =  u, A =  a, X = x) dy \\
			& = \sum_{j = 0}^1\left\{ \int y pr(y\mid u,a,x,V^*= j) dy \right\} pr(V^* = j\mid U =  u, A =  a, X =  x) \\
			& = \sum_{j = 0}^1 E(Y\mid U = u,A = a, X = x, V^* = j) pr(V^* = j\mid U =  u, A = a, X =  x)   \\
			& = \sum_{j = 0}^1 E(Y\mid U = u,A = a, X = x, V^* = j) pr(V^* = j \mid X = x)  \\
			& = \sum_{j = 0}^1 E(Y\mid U = u,A = a, X = x, V^* = j) I(V^* = j),\\
		\end{aligned}
	\end{equation*}
	where the fifth and the sixth equation hold because the inclusion/exclusion variable $V^*$ solely depends on $X$.
	
	Therefore,
	\begin{equation*}
		\begin{aligned}
			var\{E(Y\mid U,A,X)\mid V^* = j\} & = var \left\{\sum_{k = 0}^J E(Y\mid U ,A , X, V^* = j) I(V^* = k)\mid V^* = j \right\} \\
			& = var\{E(Y\mid U,A , X , V^* = j) \mid V^* = j\}. \\ 
		\end{aligned}
	\end{equation*}
	
	Similarly, we have:
	\begin{equation*}
		var\{E(Y\mid A,X)\mid V^* = j\} = var\{E(Y\mid A , X, V^* = j) \mid V^* = j\}.
	\end{equation*}
	
\end{proof}

Then we will separate the parameter $R^2_{Y \sim U\mid A, X}$ based on the proposition above. Following the law of total variance, we have:
\begin{equation*}
	\begin{aligned}
		R^2_{Y \sim U\mid A, X} & = \frac{var\{E(Y\mid U,A,X)\} - var\{E(Y\mid A,X)\}}{var(Y) - var\{E(Y\mid A,X)\}}  \\
		& = \frac{p_0 var\{E(Y\mid U,A,X)\mid V^* = 0\} + p_1var\{E(Y\mid U,A,X)\mid V^* = 1\}+ var[E\{E(Y\mid U,A,X)\mid V^*\}]} {p_0var(Y\mid V^* = 0) + p_1var(Y\mid V^* = 1)+ var\{E(Y\mid V^*)\}} \\
		& \quad \qquad \frac{- (p_0var\{E(Y\mid A,X)\mid V^* = 0\} +p_1 var\{E(Y\mid A,X)\mid V^* = 1\} + var[E\{E(Y\mid A,X)\mid V^*\}])}{ - (p_0var\{E(Y\mid A,X)\mid V^* = 0\} + p_1var\{E(Y\mid A,X)\mid V^* = 1\} + var[E\{E(Y\mid A,X)\mid V^*\}])}  \\
		& = \frac{p_0var\{E(Y\mid U,A,X,V^* = 0)\mid V^* = 0\} + p_1var\{E(Y\mid U,A,X,V^* = 1)\mid V^* = 1\}+ var[E\{E(Y\mid U,A,X)\mid V^*\}]} {p_0var(Y\mid V^* = 0) + p_1var(Y\mid V^* = 1)+ var\{E(Y\mid V^*)\}} \\
		& \quad \qquad \frac{- (p_0var\{E(Y\mid A,X,V^* = 0)\mid V^* = 0\} +p_1 var\{E(Y\mid A,X,V^* = 1)\mid V^* = 1\} + var[E\{E(Y\mid A,X)\mid V^*\}])}{ - (p_0var\{E(Y\mid A,X,V^* = 0)\mid V^* = 0\} + p_1var\{E(Y\mid A,X,V^* = 1)\mid V^* = 1\} + var[E\{E(Y\mid A,X)\mid V^*\}])}  \\
		& = \frac{p_0[var\{E(Y\mid U,A,X,V^* = 0)\mid V^* = 0\} - var\{E(Y\mid A,X,V^* = 0)\mid V^* = 0\}]}{p_0[var(Y\mid V^* = 0) - var\{E(Y\mid A,X,V^* = 0)\mid V^* = 0\}]} \\
		& \quad \qquad \frac{+ p_1[var\{E(Y\mid U,A,X,V^* = 1)\mid V^* = 1\} - var\{E(Y\mid A,X,V^* = 1)\mid V^* = 1\}]}{+ p_1[var(Y\mid V^* = 1) - var\{E(Y\mid A,X,V^* = 1)\mid V^* = 1\}]}  \\
		& \quad \qquad \quad \qquad \frac{+  var[E\{E(Y\mid U,A,X)\mid V^*\}] - var[E\{E(Y\mid A,X)\mid V^*\}]}{ + var[E\{E(Y\mid A,X)\mid V^*\}] - var[E\{E(Y\mid A,X)\mid V^*\}] } \\  
		& = \frac{p_0 \frac{var\{E(Y\mid U,A,X,V^* = 0)\mid V^* = 0\} - var\{E(Y\mid A,X,V^* = 0)\mid V^* = 0\}}{[var(Y\mid V^* = 0) - var\{E(Y\mid A,X,V^* = 0)\mid V^* = 0\}][var(Y\mid V^* = 1) - var\{E(Y\mid A,X,V^* = 1)\mid V^* = 1\}]}}{p_0\frac{1}{var(Y\mid V^* = 1) - var\{E(Y\mid A,X,V^* = 1)\mid V^* = 1\}}} \\
		& \quad \qquad \frac{+ p_1\frac{var\{E(Y\mid U,A,X,V^* = 1)\mid V^* = 1\} - var\{E(Y\mid A,X,V^* = 1)\mid V^* = 1\}}{[var(Y\mid V^* = 0) - var\{E(Y\mid A,X,V^* = 0)\mid V^* = 0\}][var(Y\mid V^* = 1) - var\{E(Y\mid A,X,V^* = 1)\mid V^* = 1\}]}}{+ p_1\frac{1}{var(Y\mid V^* = 0) - var\{E(Y\mid A,X,V^* = 0)\mid V^* = 0\}}}  \\
		& \quad \qquad \quad \qquad \frac{+ \frac{var[E\{E(Y\mid U,A,X)\mid V^*\}] - var[E\{E(Y\mid A,X)\mid V^*\}]}{[var(Y\mid V^* = 0) - var\{E(Y\mid A,X,V^* = 0)\mid V^* = 0\}][var(Y\mid V^* = 1) - var\{E(Y\mid A,X,V^* = 1)\mid V^* = 1\}]}}{ + \frac{var[E\{E(Y\mid A,X)\mid V^*\}] - var[E\{E(Y\mid A,X)\mid V^*\}]}{[var(Y\mid V^* = 0) - var\{E(Y\mid A,X,V^* = 0)\mid V^* = 0\}][var(Y\mid V^* = 1) - var\{E(Y\mid A,X,V^* = 1)\mid V^* = 1\}]} } \\
		& = \frac{p_0\frac{R^2_{Y \sim U\mid A, X, V^* = 0}}{var(Y\mid V^* = 1) -  var\{E(Y\mid A,X) \mid V^* = 1\}} + p_1\frac{R^2_{Y \sim U\mid A, X, V^* = 1}}{var(Y\mid V^* = 0) -  var\{E(Y\mid A,X)\mid V^* = 0\}}}{p_0\frac{1}{var(Y\mid V^* = 1) -  var\{E(Y\mid A,X)\mid V^* = 1\}} + p_1\frac{1}{var(Y\mid V^* = 0) -  var\{E(Y\mid A,X)\mid V^* = 0\}}} \\
		& \quad \qquad \frac{+ \frac{ var[E\{E(Y\mid U,A,X)\mid V^*\}] - var[E\{E(Y\mid A,X)\mid V^*\}]}{[var(Y\mid V^* = 1) -  var\{E(Y\mid A,X)\mid V^* = 1\}][var(Y\mid V^* = 0) -  var\{E(Y\mid A,X)\mid V^* = 0\}]}}{+\frac{var\{E(Y\mid V^*)\} - var[E\{E(Y\mid A,X)\mid V^*\}]}{[var(Y\mid V^* = 1) -  var\{E(Y\mid A,X)\mid V^* = 1\}][var(Y\mid V^* = 0) -  var\{E(Y\mid A,X)\mid V^* = 0\}]}}.
	\end{aligned}
\end{equation*}

The separability of the sensitivity parameter $R^2_{A \sim U\mid X}$ can be shown by the same procedure.

\begin{equation*}
	\begin{aligned}
		R^2_{A \sim U\mid X} & = \frac{var\{E(A\mid A,X)\} - var\{E(A\mid X)\} }{var(A) - var\{E(A\mid X)\}}  \\
		& = \frac{p_0 var\{E(A\mid U,X)\mid V^* = 0\} + p_1var\{E(A\mid U,X)\mid V^* = 1\}+ var[E\{E(A\mid U,X)\mid V^*\}]} {p_0 var(A\mid V^* = 0) + p_1 var(A\mid V^* = 1)+ var\{E(A\mid V^*)\}} \\
		& \quad \qquad \frac{- (p_0var\{E(A\mid X)\mid V^* = 0\} +p_1 var\{E(A\mid X)\mid V^* = 1\} + var[E\{E(A\mid X)\mid V^*\}])}{ - (p_0var(E(A\mid X)\mid V^* = 0) + p_1var(E(A\mid X)\mid V^* = 1) + var[E\{E(A\mid X)\mid V^*\}])}  \\
		& = \frac{p_0var\{E(A\mid U,X,V^* = 0)\mid V^* = 0\} + p_1var\{E(A\mid U,X,V^* = 1)\mid V^* = 1\}+ var\{E(E(A\mid U,X)\mid V^*)\}} {p_0var(A\mid V^* = 0) + p_1var(A\mid V^* = 1)+ var\{E(A\mid V^*)\}} \\
		& \quad \qquad \frac{- (p_0var\{E(A\mid X, V^* = 0)\mid V^* = 0\} +p_1 var\{E(A\mid X,V^* = 1)\mid V^* = 1\} + var[E\{E(A\mid X)\mid V^*\}])}{ - (p_0var\{E(A\mid X,V^* = 0)\mid V^* = 0\} + p_1var(E(A\mid X,V^* = 1)\mid V^* = 1) + var[E\{E(A\mid X)\mid V^*\}])}  \\
		& = \frac{p_0 [var\{E(A\mid U,X,V^* = 0)\mid V^* = 0\} - var\{E(A\mid X, V^* = 0)\mid V^* = 0\}]}{p_0[var(A\mid V^* = 0) - var\{E(A\mid X,V^* = 0)\}]} \\
		& \quad \qquad \frac{+p_1[var\{E(A\mid U,X,V^* = 1)\mid V^* = 1\} - var\{E(A\mid X,V^* = 1)\mid V^* = 1\}]}{+p_1[var(A\mid V^* = 1) - var\{E(A\mid X,V^* = 1)\}} \\
		& \quad \qquad \quad \qquad \frac{+var\{E(E(A\mid U,X)\mid V^*)\} - var[E\{E(A\mid X)\mid V^*\}]}{+ var\{E(A\mid V^*) - var[E\{E(A\mid X)\mid V^*\}]} \\ %
		& = \frac{p_0 \frac{var\{E(A\mid U,X,V^* = 0)\mid V^* = 0\} - var\{E(A\mid X, V^* = 0)\mid V^* = 0\}]}{[var(A\mid V^* = 0) - var\{E(A\mid X,V^* = 0)\}][var(A\mid V^* = 1) - var\{E(A\mid X,V^* = 1)\}]}}{p_0\frac{1}{var(A\mid V^* = 1) - var\{E(A\mid X,V^* = 1)\}}}\\%
		& \quad \qquad \frac{ + p_1 \frac{var\{E(A\mid U,X,V^* = 1)\mid V^* = 1\} - var\{E(A\mid X,V^* = 1)\mid V^* = 1\}}{[var(A\mid V^* = 0) - var\{E(A\mid X,V^* = 0)\}][var(A\mid V^* = 1) - var\{E(A\mid X,V^* = 1)\}]}}{ + p_1 \frac{1}{var(A\mid V^* = 0) - var\{E(A\mid X,V^* = 0)\}}} \\
		& \quad \qquad \quad \qquad \frac{+ \frac{var\{E(E(A\mid U,X)\mid V^*)\} - var[E\{E(A\mid X)\mid V^*\}]}{[var(A\mid V^* = 0) - var\{E(A\mid X,V^* = 0)\}][var(A\mid V^* = 1) - var\{E(A\mid X,V^* = 1)\}]}}{+ \frac{var\{E(A\mid V^*) - var[E\{E(A\mid X)\mid V^*\}]}{[var(A\mid V^* = 0) - var\{E(A\mid X,V^* = 0)\}][var(A\mid V^* = 1) - var\{E(A\mid X,V^* = 1)\}]}} \\
		& = \frac{p_0\frac{R^2_{A \sim U\mid X,V^* = 0} }{var(A\mid V^* = 1) -  var\{E(A\mid X)\mid V^* = 1\}} + p_1\frac{R^2_{A \sim U\mid X,V^* = 1} }{var(A\mid V^* = 0) -  var\{E(A\mid X)\mid V^* = 0\}}}{p_0\frac{1}{var(A\mid V^* = 1) -  var\{E(A\mid X)\mid V^* = 1\}} + p_1\frac{1}{var(A\mid V^* = 0) -  var(E(A\mid X)\mid V^* = 0)}} \\
		& \quad \qquad \frac{+ \frac{var[E\{E(A\mid U,X)\mid V^*\}] - var[E\{E(A\mid X)\mid V^*\}]}{[var(A\mid V^* = 1) -  var\{E(A\mid X)\mid V^* = 1\}][var(A\mid V^* = 0) -  var\{E(A\mid X)\mid V^* = 0\}]}}{+\frac{var\{E(A\mid V^*)\} - var[E\{E(A\mid X)\mid V^*\}]}{[var(A\mid V^* = 1) -  var\{E(A\mid X)\mid V^* = 1\}][var(A\mid V^* = 0) -  var\{E(A\mid X)\mid V^* = 0\}]}} .
	\end{aligned}
\end{equation*}

The equations above include two unknown nuisance parameters, denoted in our paper as $\theta = (var[E\{E(Y\mid U,A,X)\mid V^*\}], var[E\{E(A\mid U,X)\mid V^*\}])$. We need to verify that these nuisance parameters are not deterministic by the sensitivity parameters in sub-datasets. Thus, the nuisance parameters can be treated as fixed values when varying the sensitivity parameters.

First, we explicitly express the nuisance parameters in the following forms:
\begin{equation*}
	\begin{aligned}
		var[E\{E(Y\mid U,A,X)\mid V^*\}] & = E[E\{E(Y \mid U,A,X)|V^*\}^2] - E[E\{E(Y \mid U,A,X) \mid V^*\}] \\
		& = p_0E\{E(Y \mid U,A,X)|V^* = 0\}^2 + p_1E\{E(Y \mid U,A,X)|V^* = 1\}^2  \\
		& \quad \qquad -[p_0 E\{E(Y \mid U,A,X)|V^* = 0\} + p_1 E\{E(Y \mid U,A,X)|V^* = 1\}], \\
		var[E\{E(A\mid U,X)\mid V^*\}] & = E[E\{E(A \mid U,X)|V^*\}^2] - E[E\{E(A \mid U,X) \mid V^*\}] \\
		& = p_0E\{E(A \mid U,X)|V^* = 0\}^2 + p_1E\{E(A \mid U,X)|V^* = 1\}^2  \\
		& \quad \qquad -[p_0 E\{E(A \mid U,X)|V^* = 0\} + p_1 E\{E(A \mid U,X)|V^* = 1\}]. \\
	\end{aligned}
\end{equation*}

The representations show that the nuisance parameters are actually functions of expectation of conditional expectations on sub-datasets. In the definition of $R^2_{Y \sim U\mid A, X,V^* = j }$ (equation \ref{eq:A1}) and $R^2_{A \sim U\mid X,V^* = j}$ (\ref{eq:A2}), and according to proposition \ref{pro:A1},  these parameters depend on  $var\{E(Y\mid U,A,X)\mid V^* = j\}$ and $var\{E(A\mid U,X)\mid V^* = j\}$ respectively, which are variances of conditional expectations. Since expectation and variance of a random variable do not determine each other, the nuisance parameters can be treated as fixed values when varying the sensitivity parameters.

\subsection{Proof of Lemma 1} \label{sec:A3}

% The proof is separated as following steps, some of which repeats the paper:

For a given sensitivity parameter $\epsilon$, with its corresponding transformation function $g_p$, we can represent the ATE estimator as 
\begin{equation}\label{eq:A3}
	\hat{\psi} = \hat{\psi}(\epsilon) = \hat{\psi} \circ g_p(\epsilon_0, \epsilon_1; \theta), 
\end{equation}
where $\epsilon_0 \in \mathcal{E}_0, \epsilon_1 \in \mathcal{E}_1$. The equation substitutes the domain of $\epsilon$, denoted as $\mathcal{E}$, with the domain of $(\epsilon_0, \epsilon_1)$. Since the parameters are separable, the domain of the right-hand side of equation \ref{eq:A3} is $\mathcal{E}_0 \times \mathcal{E}_1$.

With a consistent estimation for $\psi_0$ as $\hat{\psi}_{gen}$ through RCT generalization methods, to find the supremum of the ATE, we need to solve the following optimization problem, named as the RCT-enhanced estimator:
\begin{equation}\label{eq:A4}
	\begin{matrix}
		\sup_{(\epsilon_0,\epsilon_1) \in \mathcal{E}} \hat{\psi} \circ g_p(\epsilon_0, \epsilon_1,\theta),\\
		\text{subject to } \hat{\psi}_0(\epsilon_0) = \hat{\psi}_{gen}.
	\end{matrix}
\end{equation}
% Denote the solution for the restriction in \ref{eq:A4} as $\epsilon_0^*$.	
We denote the RCT-restricted domain of $\epsilon_0$ is:
\begin{equation*}
	\mathcal{E}_0^{'} = \{\epsilon_0^*: \hat{\psi}_0(\epsilon_0^*) = \hat{\psi}_{gen}, \epsilon_0^* \in \mathcal{E}_0\}.
\end{equation*}
Obviously, $\mathcal{E}_0^{'} \subset \mathcal{E}_0$ because the latter domain is a set without any additional restrictions. Then the RCT-restricted domain of equation \ref{eq:A3} is $ \mathcal{E}_0^{'} \times \mathcal{E}_1$. 
Then we have the inequality:
\begin{equation*}
	\sup_{(\epsilon_0^*,\epsilon_1) \in \mathcal{E}_0^{'} \times \mathcal{E}_1} \hat{\psi} \circ g_p(\epsilon_0, \epsilon_1,\theta) \leq \sup_{(\epsilon_0,\epsilon_1) \in \mathcal{E}_0 \times \mathcal{E}_1} \hat{\psi} \circ g_p(\epsilon_0, \epsilon_1,\theta) = \sup_{\epsilon \in \mathcal{E}} \hat{\psi}(\epsilon).
\end{equation*}
The same argument can be applied to analyze the infimum:
\begin{equation*}
	\inf_{(\epsilon_0^*,\epsilon_1) \in \mathcal{E}_0^{'} \times \mathcal{E}_1} \hat{\psi} \circ g_p(\epsilon_0^*, \epsilon_1,\theta) \geq \inf_{(\epsilon_0,\epsilon_1) \in \mathcal{E}_0 \times \mathcal{E}_1} \hat{\psi} \circ g_p(\epsilon_0, \epsilon_1,\theta)= \inf_{\epsilon \in \mathcal{E}} \hat{\psi}(\epsilon).
\end{equation*}
So (i) is proved. As $W_g = \sup_{\epsilon \in \mathcal{E}} \hat{\psi}(\epsilon) - \inf_{\epsilon \in \mathcal{E}} \hat{\psi}(\epsilon)$ and $W_r = \sup_{(\epsilon_0,\epsilon_1) \in \mathcal{E}_0 \times \mathcal{E}_1} \hat{\psi} \circ g_p(\epsilon_0, \epsilon_1,\theta) - \inf_{(\epsilon_0,\epsilon_1) \in \mathcal{E}_0 \times \mathcal{E}_1} \hat{\psi} \circ g_p(\epsilon_0, \epsilon_1,\theta)$, so $W_r \leq W_g$. This completes the proof of (ii).

\subsection{Proof of Theorem 2} \label{sec:A4}

We first prove the asymptotic equivalence between the RCT-enhanced estimator \ref{eq:A4} and the synthesis estimator  where the sensitivity parameter is a function of $X$ i.e. $\epsilon = \epsilon(X)$ and $\hat{\psi}(\epsilon) = n^{-1}\sum_{i = 1}^{n}\{f \circ \epsilon(X)\} \to \psi(\epsilon) = E\{f \circ \epsilon(X)\}$ in probability, where $f$ is a known function of $\epsilon$.
\begin{equation*}
	\begin{aligned}
		\psi \circ g_p(\epsilon_0^*, \epsilon_1,\theta)
		& = E[f \circ \{\epsilon_0^*(X)I(V^* = 0) +\epsilon_1(X)I(V^* = 1)\}]  \\
		& = E\{f \circ \epsilon_0^*(X)I(V^* = 0) + f \circ \epsilon_1(X)I(V^* = 1)\}  \\
		& = E\{f \circ \epsilon_0^*(X)I(V^* = 0)\} + E\{(f \circ \epsilon_1(X)I(V^* = 1)\} \\
		& = E\{f \circ \epsilon_0^*(X)\}pr(V^* = 0) + E\{f \circ \epsilon_1(X)\}pr(V^* = 1) \\
		& = p_0 \psi_0 + p_1 \psi_1(\epsilon_1). \\
	\end{aligned}
\end{equation*}

With the same $\epsilon_1$, the RCT enhanced estimator $\hat{\psi} \circ g_p(\epsilon_0^*, \epsilon_1,\theta) \to \psi \circ g_p(\epsilon_0^*, \epsilon_1,\theta)$ in probability, and the synthesis estimator $\hat{\psi}_{syn} = p_0^N \hat{\psi}_{gen} +p_1^N \hat{\psi}_1(\epsilon_1) \to  p_0 \psi_0 + p_1 \psi_1(\epsilon_1)$ in probability. Therefore, they are asymptotically equivalent. 

For the general case (ii), for a specific coordinate $(\epsilon_0^*, \epsilon_1) \in  \mathcal{E}_0^{'} \times \mathcal{E}_1$, 
% if it is set correctly, then $\hat{\psi} \circ g_p(\epsilon_0, \epsilon_1; \theta)$ is an estimator of $\psi$. 
we have:
\begin{equation*}
	\hat{\psi} \circ g_p (\epsilon_0^{*},\epsilon_1; \theta) = p_0 \psi_0 + p_1 \psi_1(\epsilon_1^\prime) + o_p(1).
\end{equation*}
where $\epsilon_1^\prime$ is different from $\epsilon_1$ because the two sides use different parameter decompositions. In particular, the right hand side doesn't contain nuisance parameter $\theta$.
Since $\hat{\psi}_{gen} \to \psi_0$ and we require $\hat{\psi}_0(\epsilon_0^*) = \hat{\psi}_{gen}$, $p_0^N \to p_0$ and $p_1^N \to p_1$, the formula 
\begin{equation*}
	\frac{\hat{\psi} \circ g_p (\epsilon_0^{*},\epsilon_1; \theta) - p_0^N \hat{\psi}_{gen}}{p_1^N}
\end{equation*}
is consistent for $\psi_1(\epsilon_1^\prime)$. 
For the synthesis sensitivity analysis estimator, if we choose the sensitivity parameter as $\epsilon_1^\prime$,
% for $\psi_1$ as $\hat{\psi}_1(\epsilon_1^{'})$, if the sensitivity parameter $\epsilon_1^{'}$ is correctly specified, 
we have $ \hat{\psi}_1(\epsilon_1^{'}) = \psi_1(\epsilon_1^\prime) + o_p(1)$. Therefore
\begin{equation}\label{eq:A5}
	\hat{\psi}_{syn} =p_0^N\hat{\psi}_{gen} + p_1^N\hat{\psi}_1(\epsilon_1^{'}) = \hat{\psi} \circ g_p (\epsilon_0^{*},\epsilon_1; \theta) + o_p(1),
\end{equation}
and the set of such $\epsilon_1^{'}$ is the solution set $\mathcal{E}_{sol}$. The proof ends.

\subsection{Proof of Corollary 1}
According to Equation \ref{eq:A5}, we know that if we vary $\epsilon_1^{'}$ in $\mathcal{E}_{sol}$ in the synthesis method, then we will get the same sensitivity analysis bounds as using RCT-enhanced sensitivity analysis method. As proposed in Section \ref{eq:A3}, since the RCT-enhanced estimator can get a smaller bound width than the general sensitivity analysis method with $\epsilon_1 \in \mathcal{E}_1$, it can be deduced that the synthesis method yields a narrower bound than the original sensitivity analysis method. Then (i) is proved, i.e. $W_s \leq W_g$.

As for (ii), according to Equation \ref{eq:A5}, for $\mathcal{E} = \mathcal{E}_0 \times \mathcal{E}_1$, we have:
\begin{equation*} 
	\begin{aligned}
		\frac{W_s}{W_g} & = \frac{\sup_{\epsilon_1^{'} \in \mathcal{E}_{sol}} \hat{\psi}_{syn}(\epsilon_1^{'}) - \inf_{\epsilon_1^{'}\in \mathcal{E}_{sol}} \hat{\psi}_{syn}(\epsilon_1^{'}) }{\sup_{\epsilon \in \mathcal{E}} \hat{\psi}(\epsilon) - \inf_{\epsilon \in \mathcal{E}} \hat{\psi}(\epsilon)} \\
		& = \frac{\sup_{\epsilon_1^{'} \in \mathcal{E}_{sol}} p_0^N\hat{\psi}_{gen} + p_1^N\hat{\psi}_1(\epsilon_1^{'}) - \{\inf_{\epsilon_1^{'}\in \mathcal{E}_{sol}} p_0^N\hat{\psi}_{gen} + p_1^N\hat{\psi}_1(\epsilon_1^{'})\} }{\sup_{\epsilon \in \mathcal{E}} \hat{\psi}(\epsilon) - \inf_{\epsilon \in \mathcal{E}} \hat{\psi}(\epsilon)} \\
		& = \frac{\sup_{(\epsilon_0^*,\epsilon_1) \in \mathcal{E}_0^{'} \times \mathcal{E}_{1}} \hat{\psi}\{g_p(\epsilon_0^{'}, \epsilon_1; \theta)\} - \inf_{(\epsilon_0^*,\epsilon_1) \in \mathcal{E}_0^{'} \times \mathcal{E}_{1}} \hat{\psi}\{g_p(\epsilon_0^{'}, \epsilon_1; \theta)\} + o_p(1) }{\sup_{(\epsilon_0,\epsilon_1) \in \mathcal{E}_0 \times \mathcal{E}_{1}} \hat{\psi}\{g_p(\epsilon_0, \epsilon_1; \theta)\} - \inf_{(\epsilon_0,\epsilon_1) \in \mathcal{E}_0 \times \mathcal{E}_{1}} \hat{\psi}\{g_p(\epsilon_0, \epsilon_1; \theta)\} }. \\
	\end{aligned}
\end{equation*}

Therefore, as $p_0 \to 1$, $p_0^N \to 1$ due to the consistency. Then we have the limitation due to the second equation above:
\begin{equation*}
	\begin{aligned}
		\lim_{p_0 \to 1} \frac{W_s}{W_g} & = \lim_{p_0^N \to 1}\frac{\sup_{\epsilon_1^{'} \in \mathcal{E}_{sol}} p_0^N\hat{\psi}_{gen} + p_1^N\hat{\psi}_1(\epsilon_1^{'}) - \{\inf_{\epsilon_1^{'}\in \mathcal{E}_{sol}} p_0^N\hat{\psi}_{gen} + p_1^N\hat{\psi}_1(\epsilon_1^{'})\} }{\sup_{\epsilon \in \mathcal{E}} \hat{\psi}(\epsilon) - \inf_{\epsilon \in \mathcal{E}} \hat{\psi}(\epsilon)} \\
		& = \frac{\hat{\psi}_{gen} - \hat{\psi}_{gen}}{\sup_{\epsilon \in \mathcal{E}} \hat{\psi}(\epsilon) - \inf_{\epsilon \in \mathcal{E}} \hat{\psi}(\epsilon)} \\
		&= 0.
	\end{aligned}
\end{equation*}

As $p_0 \to 0$, $p_0^N \to 0$. According to the third equation, we have:
\begin{equation*}
	\begin{aligned}
		\lim_{p_0 \to 0} \frac{W_s}{W_g} & =\lim_{p_0 \to 0} \frac{\sup_{(\epsilon_0^*,\epsilon_1) \in \mathcal{E}_0^{'} \times \mathcal{E}_{1}} \hat{\psi}\{g_p(\epsilon_0^{'}, \epsilon_1; \theta)\} - \inf_{(\epsilon_0^*,\epsilon_1) \in \mathcal{E}_0^{'} \times \mathcal{E}_{1}} \hat{\psi}\{g_p(\epsilon_0^{'}, \epsilon_1; \theta)\} +o_p(1)}{\sup_{(\epsilon_0,\epsilon_1) \in \mathcal{E}_0 \times \mathcal{E}_{1}} \hat{\psi}\{g_p(\epsilon_0, \epsilon_1; \theta)\} - \inf_{(\epsilon_0,\epsilon_1) \in \mathcal{E}_0 \times \mathcal{E}_{1}} \hat{\psi}\{g_p(\epsilon_0, \epsilon_1; \theta)\} } \\
		& = \frac{\sup_{\epsilon_1 \in \mathcal{E}_1} \hat{\psi}(\epsilon_1) - \inf_{\epsilon_1 \in \mathcal{E}_1} \hat{\psi}(\epsilon_1) +o_p(1)}{\sup_{\epsilon_1 \in \mathcal{E}_1} \hat{\psi}(\epsilon_1) - \inf_{\epsilon_1 \in \mathcal{E}_1} \hat{\psi}(\epsilon_1)} \\
		& \to \frac{\sup_{\epsilon_1 \in \mathcal{E}_1} \hat{\psi}(\epsilon_1) - \inf_{\epsilon_1 \in \mathcal{E}_1} \hat{\psi}(\epsilon_1) }{\sup_{\epsilon_1 \in \mathcal{E}_1} \hat{\psi}(\epsilon_1) - \inf_{\epsilon_1 \in \mathcal{E}_1} \hat{\psi}(\epsilon_1)} \\
		& = 1,
	\end{aligned}
\end{equation*}
where the second equation holds due to the fact of the transformation function proposed in the paper that
\begin{equation*}
	\lim_{p_j \to 1} \epsilon = \lim_{p_j \to 1} g_p(\epsilon_0, \epsilon_1; \theta) = \epsilon_j, \quad j \in \{0,1\}.
\end{equation*}

\newpage

\section{Supplemental Results}\label{sec:B}
\subsection{Simulation Results}

This subsection includes additional simulation results comparing sensitivity analysis bound widths in scenario II. For a fixed 
$q = 0.7$, Table \ref{tab:B1} presents two bounding approaches: the observed covariate benchmark bounding and the fixed value bounding method, along with the corresponding sensitivity analysis bound widths for both sensitivity analysis methods. Figure \ref{fig:B1} illustrates the trend of the bound width ratio of the synthesis method over the general sensitivity method as the RCT-overlapping proportion $q$ varies.

\begin{table}[h]
	\centering
	\begin{threeparttable}
		\caption{Sensitivity analysis bounds comparison between synthesis method and original method based on omitted variable bias framework in \citet{cinelli2020making}}
		\label{tab:B1}
	\begin{tabular}{cccccc}	
		&	& OS & OS+RCT(OM) & OS+RCT(IPSW) & OS+RCT(AIPSW)  \\
		& lb& 9.37(1.18)&9.00(0.78) & 9.14(0.82)& 8.83(0.82) \\
		$X_1$ bounding & ub &16.55(1.10) &13.35(0.73) & 13.55(0.74)&13.36(0.71)  \\
		& mbw &7.18 &4.35 &4.40 & 4.53  \\
		& lb &6.94(0.75) &8.47(0.40) & 8.86(0.52)&8.64(0.38) \\
		fixed value bounding & ub &19.14(0.75) &13.61(0.44) &13.81(0.57) & 13.58(0.41) \\
		& mbw &12.20 &5.15 &4.95 & 4.94\\	
	\end{tabular}
	\end{threeparttable}
	\begin{tablenotes}
		\item 	lb, lower bound; ub, upper bound; mbw, mean bound width. Other abbreviations refer to the paper.
	\end{tablenotes}
\end{table}

\begin{figure}[h]
	\centering
	\includegraphics[width=0.6\linewidth]{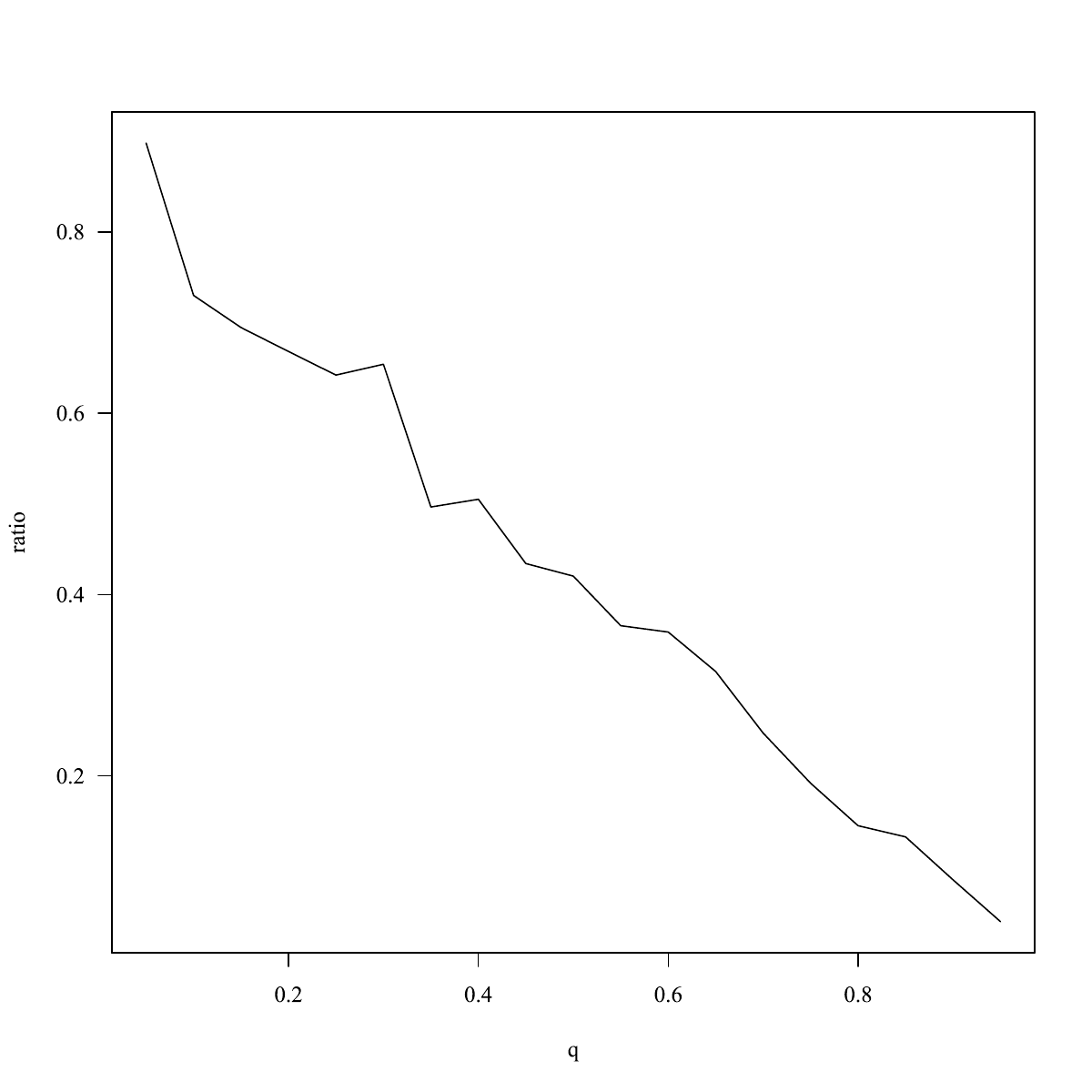}
	\caption{Sensitivity analysis bound width ratio width a varing RCT-overlap proportion.}
	\label{fig:B1}
\end{figure}

\newpage
\subsection{Real-data Analysis Results}
This subsection presents further analysis results comparing the effectiveness of SXC and Losartan in reducing hypertension using the synthesis method. Tables \ref{tab:B2} and \ref{tab:B3} display the covariate benchmarking bounds for unmeasured confounding strength, with the benchmarking methodology referencing \citet{cinelli2020making}. Table \ref{tab:B4} and Figure \ref{fig:B2} provide additional comparison results.

For DDBP at week 8, we conservatively use BDBP as the benchmark variable for bounding the confounding strength.% The general sensitivity analysis method yields a lower bound of -0.09 (95\% CI: [-0.62, 0.60]) and an upper bound of 2.96 (95\% CI: [2.27, 3.62]), with a mean bound width of 3.06. In contrast, our synthesis method provides a narrower bound: a lower bound of 0.59 (95\% CI: [-0.38, 1.43]) and an upper bound of 3.06 (95\% CI: [2.37, 3.65]), resulting in a mean bound width of 2.46.
Both methods conclude that there is no significant difference in the efficacy of SXC and Losartan on reducing DSBP, considering unmeasured confounders conservatively. However, the synthesis method offers a narrower sensitivity analysis bound.

\begin{table}[h]
	\centering
	\begin{threeparttable}
		\caption{Formal benchmarking for the confounding strength in relationship between the treatment and DSBP at week 8}
		\label{tab:B2}
			\begin{tabular}{ccccc}
			& \multicolumn{2}{c}{Entire OS}	&  \multicolumn{2}{c}{RCT non-overlapping OS}\\
			Bounding variable & $R^2_{A \sim U\mid X}$ &$R^2_{Y \sim U\mid A,X}$ & $R^2_{A \sim U\mid X,V^* = 1}$ &$R^2_{Y \sim U\mid A,X,V^* = 1}$ \\
			age	&3.12$\times 10^{-3}$& 5.36 $\times 10^{-3}$ &6.02 $\times 10^{-4}$  & 2.54 $\times 10^{-3}$ \\
			BSBP & 4.14 $\times 10^{-2}$ & 9.57 $\times 10^{-1}$ &3.62 $\times 10^{-2}$ & 9.78 $\times 10^{-1}$ \\
			BDBP & 1.38$\times 10^{-2}$ & 1.31$\times 10^{-3}$&1.56 $\times 10^{-2}$ &9.16 $\times 10^{-4}$ \\
			BMI & 3.79$\times 10^{-3}$ & 1.67$\times 10^{-3}$&4.78 $\times 10^{-3}$ &4.86 $\times 10^{-3}$ \\
			gender & 1.39$\times 10^{-6}$ & 3.13$\times 10^{-4}$&8.87 $\times 10^{-7}$ &6.85 $\times 10^{-4}$ \\
			smoke & $1.44\times 10^{-4}$ & 7.90$\times 10^{-6}$&2.00 $\times 10^{-6}$ &3.41 $\times 10^{-4}$ \\
			marriage & 1.47$\times 10^{-6}$ & 2.82$\times 10^{-5}$&1.58 $\times 10^{-4}$ &2.50 $\times 10^{-4}$ \\
		\end{tabular}
	\end{threeparttable}
	\begin{tablenotes}
		\item
	\end{tablenotes}
\end{table}

\begin{table}[h]
	\centering
	\begin{threeparttable}
		\caption{Formal benchmarking for the confounding strength in relationship between the treatment and DDBP at week 8}
		\label{tab:B3}
		\begin{tabular}{ccccc}
			&  \multicolumn{2}{c}{Entire OS}  &  \multicolumn{2}{c}{RCT non-overlapped OS} \\
			Bounding variable & $R^2_{A \sim U\mid X}$ &$R^2_{Y \sim U\mid A,X}$ & $R^2_{A \sim U\mid X,V^* = 1}$ &$R^2_{Y \sim U\mid A,X,V^* = 1}$ \\
			age	&3.12$\times 10^{-3}$& 3.15 $\times 10^{-6}$ &6.03 $\times 10^{-4}$  & 8.78 $\times 10^{-5}$ \\
			BSBP & 4.14 $\times 10^{-2}$ & 7.97 $\times 10^{-3}$ &3.62 $\times 10^{-2}$ & 6.04 $\times 10^{-3}$ \\
			BDBP & 1.38$\times 10^{-2}$ & 8.51$\times 10^{-1}$&1.56 $\times 10^{-2}$ &8.79 $\times 10^{-1}$ \\
			BMI & 3.79$\times 10^{-3}$ & 9.52$\times 10^{-4}$&4.78 $\times 10^{-3}$ &3.01 $\times 10^{-3}$ \\
			gender & 1.39$\times 10^{-6}$ & 1.68$\times 10^{-5}$&8.67 $\times 10^{-7}$ &2.00 $\times 10^{-4}$ \\
			smoke & $1.44\times 10^{-4}$ & 2.85$\times 10^{-4}$&2.00 $\times 10^{-6}$ &9.26 $\times 10^{-4}$ \\
			marriage & 1.47$\times 10^{-6}$ & 1.16$\times 10^{-3}$&1.58 $\times 10^{-4}$ &2.29 $\times 10^{-3}$ \\
		\end{tabular}
	\end{threeparttable}
	\begin{tablenotes}
		\item
	\end{tablenotes}
\end{table}

\begin{table}[h]
	\centering
	\begin{threeparttable}
		\caption{Comparison of sensitivity analysis bounds between general sensitivity analysis method in \citet{cinelli2020making} and synthesis method (RCT generalization method: AIPSW)}
		\label{tab:B4}
	 \begin{tabular}{ccccccc}
		& \multicolumn{3}{c}{OS} & \multicolumn{3}{c}{OS+RCT} \\
		Outcome  & lb  & ub & mbw & lb & ub & mbw \\
		DSBP    & -3.55([-4.42,-2.62]) & 4.96([4.07,5.94]) &8.51 &-2.60([-3.96,-1.44]) & 3.31([2.04,4.47]) & 5.92\\
		DDBP  &-0.09([-0.62,0.60]) &2.96([2.27,3.62])&3.05 & 0.59([-0.38,1.43]) & 3.05([2.36,3.64]) & 2.47  \\
	\end{tabular}
	\end{threeparttable}
	\begin{tablenotes}
		\item
	\end{tablenotes}
\end{table}

\begin{figure}[h]
	% The arguments in the next line are {height}{optional width}{used only by OUP for typesetting} for figure empty box eg 
	%\figurebox{20pc}{25pc}{}
	%if actual size of graphics need plese see below command
	%\figurebox{}{}{}[fig1]
	%need to reducing the figure size use below command
	%\figuresize{.8}%
	%\figuresize{.4}
	%\figurebox{20pc}{}{}[art/cinelli_BSBP_OS.png] \figurebox{20pc}{}{}[art/cinelli_BSBP_OSRCT.png] 
	% note that files may not be rotated
	\centering
	\subfigure{\includegraphics[width=.4\textwidth]{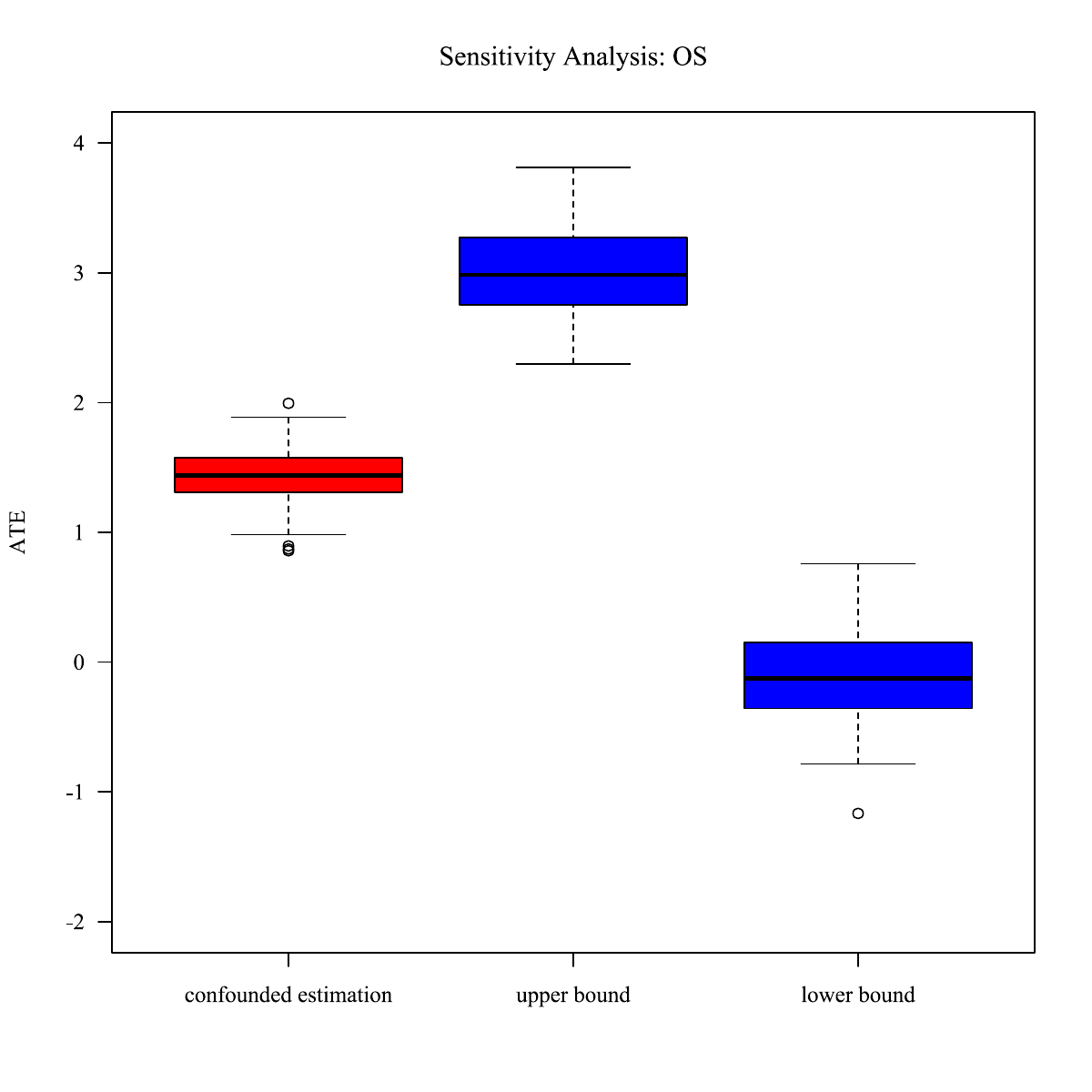}}
	\subfigure{\includegraphics[width=.4\textwidth]{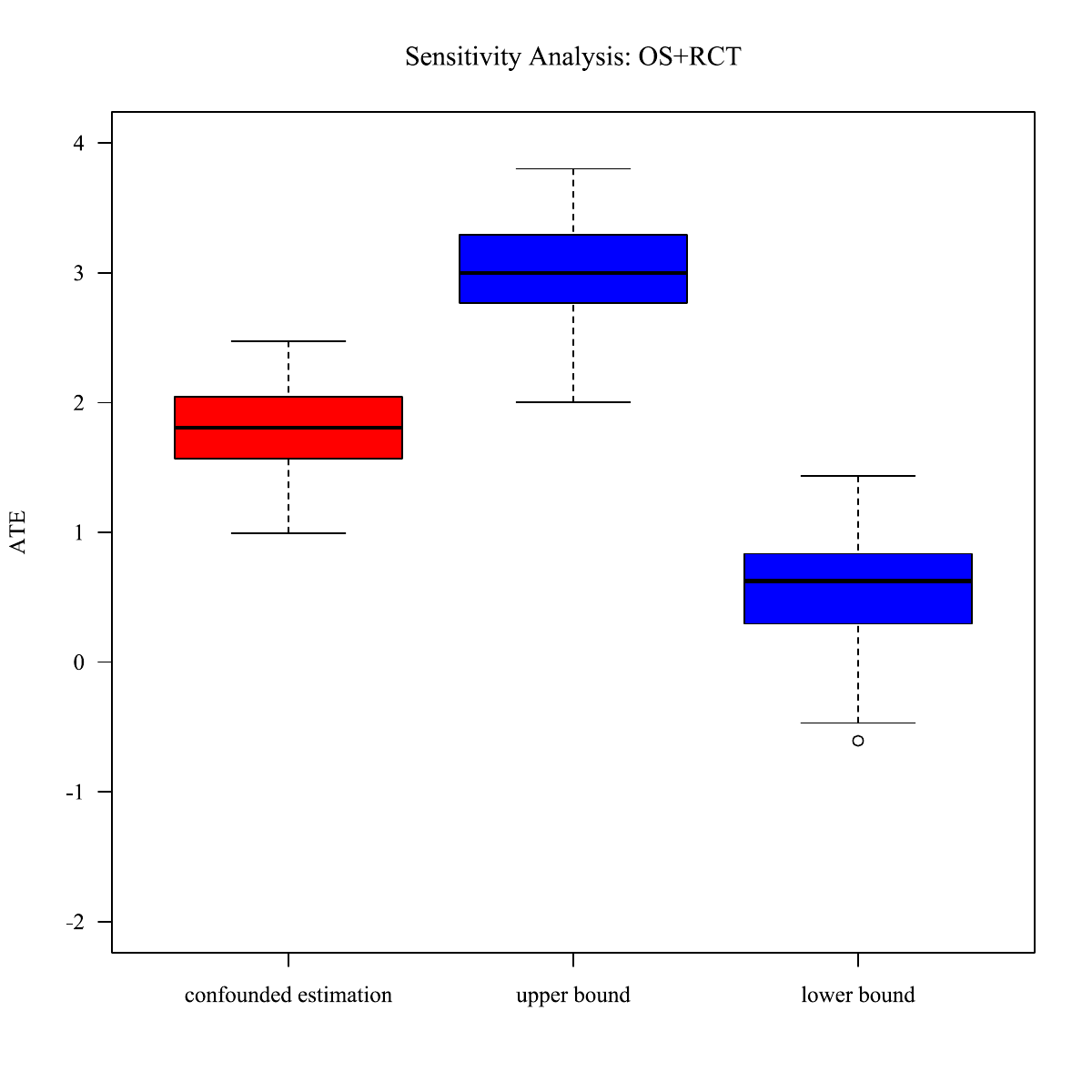}}
	\caption{Comparison of sensitivity analysis bounds between general sensitivity analysis method in \citet{cinelli2020making} and synthesis method (RCT generalization method: AIPSW). Outcome: DDBP at week 8.}
	\label{fig:B2}
\end{figure}

\end{document}